\documentclass[
    10pt,
    showpacs,
    twocolumn,
    pra,
    superscriptaddress,
    notitlepage,
]{revtex4-2}

% Geometry and Layout
\usepackage[a4paper, scale=0.85]{geometry}
\linespread{0.92}

% Math Packages
\usepackage{amsmath, amssymb, amsthm, mathrsfs, amsfonts, dsfont}
\usepackage{physics}
\usepackage{bm}
\usepackage{booktabs}

% Graphics and Figures
\usepackage[dvips]{graphicx}
\graphicspath{{./figure/}}
\usepackage{subfigure, epsfig}
\usepackage{tikz}
\usetikzlibrary{arrows, shapes, fadings, snakes, decorations.pathmorphing, patterns, calc, positioning}
\usepackage{array}
\usepackage{adjustbox}
\usepackage[normalem]{ulem}
\usepackage{float}

% Tables, Algorithms, and Pseudocode
\usepackage{multirow}
\usepackage{diagbox}
\usepackage{algorithm}
\usepackage{algorithmicx}
\usepackage{algpseudocode}

% Quantum Circuit Package
\usepackage{qcircuit}

% Custom Commands and Environments

\newtheorem{prop}{Proposition}

\newcommand{\comments}[1]{}

% Caption Customization
\usepackage{caption}
\captionsetup{font=small, labelfont=bf}

\usepackage{ragged2e}

% ORCID iD icon
%\usepackage{orcidlink}
\newcommand{\orcidlink}[1]{\href{https://orcid.org/#1}{\includegraphics[width=8pt]{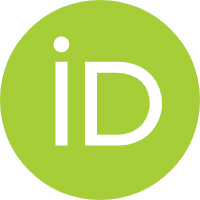}}}

% Hyperlinks Setup
\usepackage{hyperref}
\hypersetup{
   colorlinks=true,
   linkcolor=purple,
   urlcolor=purple,
   citecolor=purple
}

% Redefine algorithmic environment keywords after loading all packages
\makeatletter
\AtBeginDocument{

}
\makeatother

% Adjust column spacing
\setlength{\tabcolsep}{12pt}

\begin{document}
\title{Circuit optimization of qubit IC-POVMs for shadow estimation}

\author{Zhou You\orcidlink{0000-0002-6140-2092}}
\affiliation{Key Laboratory for Information Science of Electromagnetic Waves (Ministry of Education), Fudan University, Shanghai 200433, China}

\author{Qing Liu\orcidlink{0000-0002-1576-1975}}
\affiliation{Key Laboratory for Information Science of Electromagnetic Waves (Ministry of Education), Fudan University, Shanghai 200433, China}

\author{You Zhou\orcidlink{0000-0003-0886-077X}}
\email{you\_zhou@fudan.edu.cn}
\affiliation{Key Laboratory for Information Science of Electromagnetic Waves (Ministry of Education), Fudan University, Shanghai 200433, China}
%\affiliation{Hefei National Laboratory, Hefei 230088, China}
% \date{\today}

\begin{abstract}
Extracting information from quantum systems is crucial in quantum physics and information processing. Methods based on randomized measurements, like shadow estimation, show advantages in effectively achieving such tasks. However, randomized measurements require the application of random unitary evolution, which unavoidably necessitates frequent adjustments to the experimental setup or circuit parameters, posing challenges for practical implementations. To address these limitations, positive operator-valued measurements (POVMs) have been integrated to realize real-time single-setting shadow estimation. In this work, we advance the POVM-based shadow estimation by reducing the CNOT gate count for the implementation circuits of informationally complete POVMs (IC-POVMs), in particular, the symmetric IC-POVMs (SIC-POVMs), through the dimension dilation framework. We show that any single-qubit minimal IC-POVM can be implemented using at most 2 CNOT gates, while an SIC-POVM can be implemented with only 1 CNOT gate. In particular, we provide a concise form of the compilation circuit of any SIC-POVM along with an efficient algorithm for the determination of gate parameters. Moreover, we apply the optimized circuit compilation to shadow estimation, showcasing its noise-resilient performance and highlighting the flexibility in compiling various SIC-POVMs. Our work paves the way for the practical applications of qubit IC-POVMs on quantum platforms.
\end{abstract}
\maketitle

\section{Introduction}
Estimating the properties of quantum systems constitutes a fundamental task in various fields, ranging from quantum physics to quantum information processing.
However, as the degree of freedom, such as the number of qubits in quantum systems, increases, the exponentially large Hilbert space faces significant challenges in quantum measurement and benchmarking \cite{preskill2018quantum,bharti2022noisy}. Moreover, efficient methods are urgently needed to characterize the noise and performance of quantum computing platforms \cite{eisert2020quantum,kliesch2021theory}.

A recent significant advancement along this direction is shadow estimation \cite{huang2020predicting}, which enables the simultaneous estimation of many properties of the state $\rho$ using the shadow set $\{\hat{\rho}_i\}$ collected from randomized measurements \cite{elben2023randomized}. Here, each snapshot $\hat{\rho}_i$ is an unbiased estimator of $\rho$, which is constructed from the projective measurement result $b_i$ in the basis determined by the unitary $U_i$ sampled from some tomographic-(over)complete random unitary ensemble $\mathcal{U}$ \cite{hu2023classical,akhtar2023scalable,bertoni2024shallow,zhang2024minimal,imai2024collective}.
Based on this framework, there are various developments in the applications, from near-term quantum algorithms \cite{sack2022avoiding,boyd2022training} and quantum error mitigation \cite{Hu2022Logical,seif2023shadow,peng2024experimental,zhou2024hybrid,liu2024auxiliary}
to quantum correlation detection \cite{elben2020mixed,rath2021quantum,liu2022detecting} and quantum chaos diagnosis \cite{garcia2021quantum,mcginley2022quantifying}. Nevertheless, the routine of sampling from some random unitary ensemble introduces considerable complexity in practical realizations. On the one hand, such sampling requires considerable resources in randomness generation and essential compilations. On the other hand, it demands frequent changes of the experimental setups or circuit parameters. Moreover, each shot $b_i$ is highly correlated with the corresponding $U_i \in \mathcal{U}$, which could make the shadow estimation vulnerable to noises due to the frequent changes of basis, especially the gate-dependent noises \cite{wallman2018randomized,chen2021robust,brieger2023stability}.

\begin{figure}[!ht]
\centering
\includegraphics[width=0.5\textwidth]{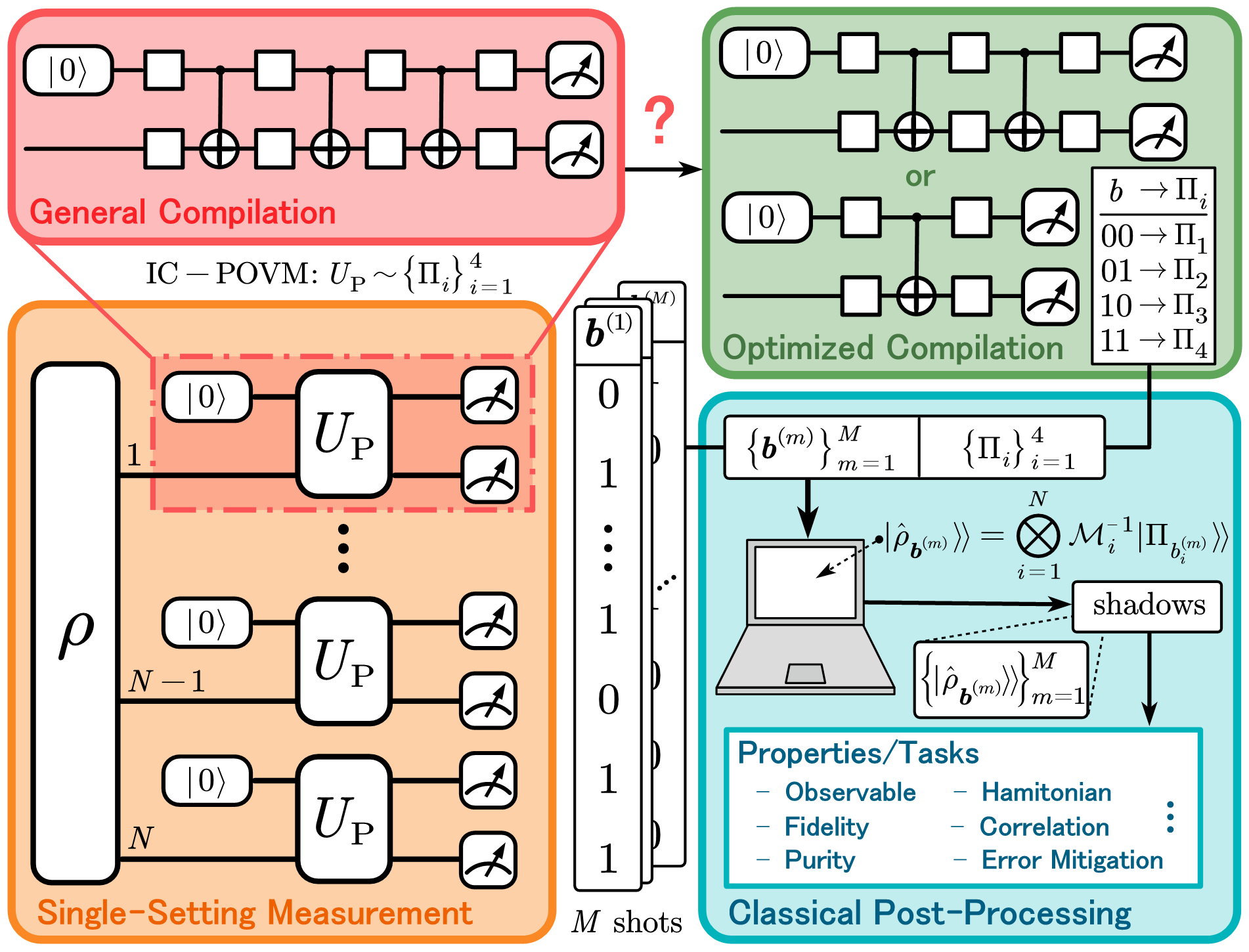}
\caption{\justifying{Framework for circuit optimization of qubit IC-POVM for shadow estimation. The bottom-left (orange) part illustrates the quantum measurement stage, where IC-POVM is implemented for each qubit using an auxiliary qubit and a suitable $U_{\mathrm{P}}$, as explained in Sec.~\ref{sec:preli}. The top-left (red) part shows the conventional compiled circuit structure of $U_{\mathrm{P}}$, which typically requires 3 CNOT gates. The top-right (green) part represents the main contribution of this work, as discussed in Sec.~\ref{sec:reduce}, which reduces the CNOT gate count to 2 or 1 while still implementing the same IC-POVM. The bottom-right (blue) part depicts the classical post-processing stage in the shadow estimation, which is investigated in Sec.~\ref{sec:appli} to demonstrate the advantages of the optimized circuit.}}
\label{fig:overall}
\end{figure}

General IC-POVMs beyond the projective measurements are introduced to overcome such limitations to realize real-time single-setting shadow estimation \cite{acharya2021shadow,nguyen2022optimizing,innocenti2023shadow,garcia2021learning,stricker2022experimental}. However, optimizing the implementation of IC-POVMs for shadow estimation remains a crucial yet unresolved challenge, especially considering the noises in quantum hardware.
Refs.~\cite{acharya2021shadow,nguyen2022optimizing,innocenti2023shadow} effectively introduce IC-POVMs into shadow estimation and provide performance analysis along with optimization strategies, but these works do not address the compilation of POVMs. Ref.~\cite{garcia2021learning} takes a step further by implementing POVMs in a (direct-tensor) dilation framework but relies on brute-force unitary decomposition for the compilation, which typically requires three CNOT gates for a single-qubit POVM. This presents a challenge for near-term devices, where CNOT gates are particularly susceptible to noise and thus exhibit higher error rates compared to single-qubit gates \cite{tannu2019not}. Consequently, it is imperative to lessen their usage. Moreover, the direct-sum dilation framework presented in Refs.~\cite{stricker2022experimental,fischer2022ancilla} offers an alternative approach by utilizing additional levels to implement qubit-POVM with qudit. However, this method could be limited to platforms that support qudit extensions, such as the ion-trap system \cite{stricker2022experimental}. Therefore, developing a general framework for efficiently compiling and optimizing POVMs on qubit-based quantum computers remains an open and significant problem. 

To address this challenge, we propose an efficient IC-POVM compilation strategy for qubit circuits and apply it to effectively implement and optimize shadow estimation, as shown in Fig.~\ref{fig:overall}. 
To analyze and reduce the CNOT count of the compilation, we extract the parameters of the unitary corresponding to some POVM as gates in the circuit form. This allows us to freely adjust the parameters that are irrelevant to the POVM. In particular, we demonstrate that any single-qubit minimal IC-POVM can be implemented with no more than 2 CNOT gates. For SIC-POVMs, we further reduce the CNOT count to 1 and provide both an explicit compilation circuit and an algorithm for determining the required parameters. Moreover, we apply the proposed 1-CNOT SIC-POVM circuits to shadow estimation tasks and demonstrate their noise-resilient performance and flexibility in compiling various SIC-POVMs. These findings show that our work can enhance the practicability of POVM-based shadow estimation, from noise reduction to compatibility with POVM-related algorithms \cite{garcia2021learning,nguyen2022optimizing}. 

\section{Preliminaries}\label{sec:preli}
In this section, we first introduce the basic concepts of POVMs, and then discuss their implementations in quantum circuits, and finally introduce the local equivalence class of 2-qubit gates, which helps determine the minimal CNOT count needed for gate compilation.

\subsection{Positive Operator-Valued Measurements}
Assume that $\left\{ \Pi _i \right\} _{i=1}^{n}$  defines a discrete POVM with $n$ elements on a $d$-dimensional Hilbert space $\mathcal{H}$, and $\mathcal{L} \left( \mathcal{H} \right)$ is the $d^2$-dimensional space of linear operators on $\mathcal{H}$, then this POVM satisfies the following conditions that
\begin{align}\label{eq:povm_conditions}
    \begin{aligned}
        \Pi _i\succeq 0, \ \sum\nolimits_{i=1}^n{\Pi _i=I},
    \end{aligned}
\end{align}
where $I$ is the identity matrix. The first part of these conditions ensures the positive semi-definiteness of the POVM elements, while the second part ensures completeness. Note that POVM elements do not necessarily span the entire $\mathcal{L} \left( \mathcal{H} \right)$, those that do are called IC-POVMs with
\begin{align}\label{eq:icpovm_conditions}
    \begin{aligned}
        \mathrm{rank}\left( \sum_{i=1}^n{|\Pi _i\rangle \!\rangle \langle \!\langle \Pi _i|} \right) =\mathrm{dim}\left( \mathcal{L} \left( \mathcal{H} \right) \right) .
    \end{aligned}
\end{align}
Here, the term $|A\rangle \!\rangle$ denotes the vectorization of the operator $A\in\mathcal{L} \left( \mathcal{H} \right)$. For two operators $A,B\in \mathcal{L} \left( \mathcal{H} \right)$, the Hilbert-Schmidt inner product is given by $\langle \!\langle A|B\rangle \!\rangle =\mathrm{Tr}\left( A^{\dagger}B \right)$. An IC-POVM that spans the entire $\mathcal{L} \left( \mathcal{H} \right)$ using the minimum number of elements is called a minimal IC-POVM \cite{zhu2011quantum}, where $n = \mathrm{dim}\left( \mathcal{L} \left( \mathcal{H} \right) \right) = d^2$. Among minimal IC-POVMs, SIC-POVM is an important class characterized by the high symmetry among its elements \cite{renes2004symmetric,zhu2010sic,huangjun2012quantum,fuchs2017sic,appleby2018constructing}. Specifically, the elements of a SIC-POVM satisfy $\mathrm{rank}\left( \Pi _i \right) =1$, and
\begin{align}\label{eq:sicpovm_conditions}
    \begin{aligned}
        \langle \!\langle \Pi _i|\Pi _j\rangle \!\rangle=\frac{d\delta _{ij}+1}{d^2\left( d+1 \right)}.
    \end{aligned}
\end{align}
Two typical single-qubit SIC-POVMs are illustrated in Fig.~\ref{fig:2sets}.

\begin{figure}[htbp]
\centering
\includegraphics[width=0.48\textwidth]{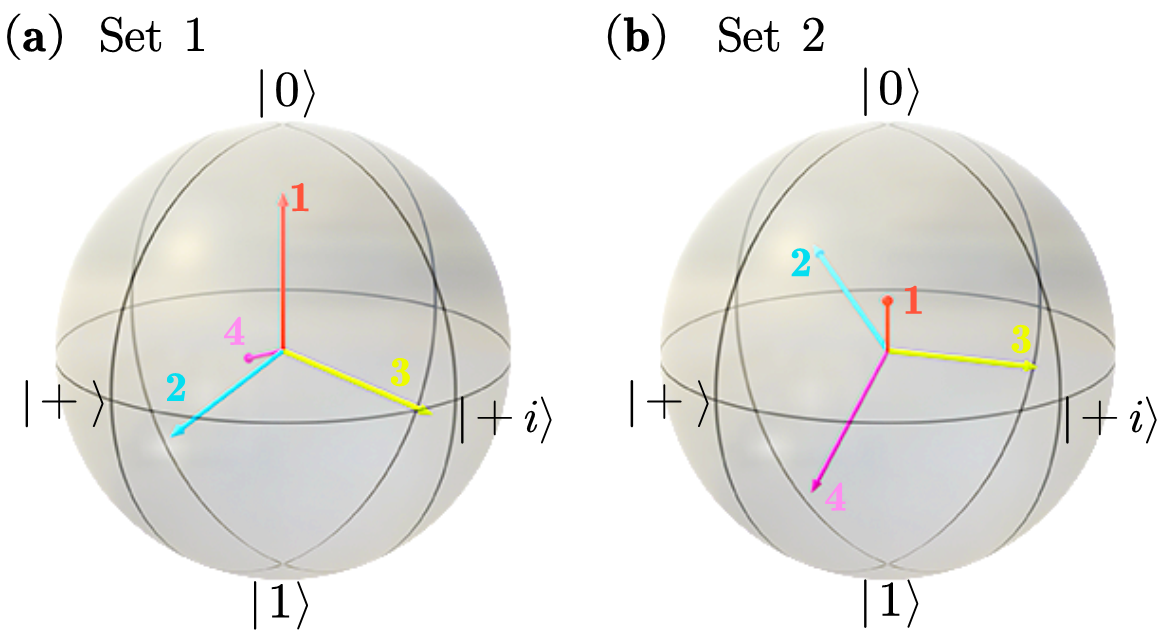}
\caption{\justifying{Two commonly used SIC-POVM sets. (a) Set 1: $\{ \Pi _i=|\varphi _i\rangle \langle \varphi _i|\}_{i=1}^{4}$, where $|\varphi _1\rangle = \frac{1}{\sqrt{2}}|0\rangle $ and $|\varphi _j\rangle =\frac{1}{\sqrt{6}}\left( |0\rangle +\sqrt{2}e^{\mathrm{i}\frac{2\pi}{3}\left( j-2 \right)}\ket{1} \right), \ j\in \{2,3,4\}$ are subnormalized state vectors. This set is widely used like in Refs.~\cite{stricker2022experimental, garcia2021learning, garcia2023experimentally}.
(b) Set 2: The Weyl–Heisenberg group covariant SIC-POVM \cite{renes2004symmetric,zhu2010sic,fuchs2017sic}, which is generated from a fiducial state represented by the red vector in the Bloch sphere. This state can be obtained by applying the unitary $R_z(\frac{3\pi}{4})R_x(\arccos(\frac{1}{3}))$ to $ \frac{1}{\sqrt{2}}\ket{0}$. Additionally, permuting the SIC-POVM elements (e.g., exchanging vectors 3 and 4) essentially results in the same SIC-POVM, merely with different labels.}}
\label{fig:2sets}
\end{figure}

\subsection{Circuit Model to Realize POVMs}\label{sec:circ_povm}
The implementations of POVMs \cite{yordanov2019implementation,singh2022implementation} primarily rely on Neumark's theorem \cite{peres1990neumark}. This theorem states that POVMs for a quantum system $\mathcal{H}_S$ can be realized by embedding $\mathcal{H}_S$ into a higher-dimensional system via an auxiliary system $\mathcal{H}_A$, applying a unitary transformation $U_{\mathrm{P}}$, and then performing PVMs on the combined system or solely on $\mathcal{H}_A$ \cite{chen2007ancilla}.
Performing joint PVMs on the combined system requires fewer auxiliary dimensions than performing PVMs only on $\mathcal{H}_A$. This strategy is depicted in the orange part of Fig.~\ref{fig:overall}, that is 
\begin{align}\label{eq:povm_process}
    &U_{\mathrm{P}}\left( \mathcal{H}_A\otimes \mathcal{H}_S \right)\ + \ \mathrm{PVM} \ \mathrm{on} \ \mathcal{H}_A\otimes \mathcal{H}_S \nonumber\\ &\rightarrow \left\{ \Pi _i \right\}_{i=1}^n \,\mathrm{on} \ \mathcal{H}_S.
\end{align}

When applying shadow estimation, preserving the post-collapse quantum states is unnecessary. Therefore, adopting the measurement strategy outlined in Eq.~\eqref{eq:povm_process} is preferable. The key point of Eq.~\eqref{eq:povm_process} is selecting the appropriate auxiliary state and $U_{\mathrm{P}}$. We first set the initial state of $\mathcal{H}_{A}$ to $|0\rangle$, then construct $U_{\mathrm{P}}$ based on the relationship between $U_{\mathrm{P}}$ and rank-1 POVMs (App.~\ref{ap:POVM_Unitary}) using a straightforward process \cite{tabia2012experimental}. For a rank-1 POVM $\left\{ \Pi _{i}=|\varphi _{i}\rangle \langle \varphi _{i}| \right\} _{i=1}^{n}$ with $n$ (subnormalized) elements in a $d$-dimensional system, an $n \times d$ matrix can be constructed as
\begin{align}\label{eq:V}
    \begin{aligned}
      V=\left[ \begin{matrix} |\varphi _1\rangle& \cdots& |\varphi _n\rangle\\ \end{matrix} \right] ^{\dagger} .
    \end{aligned}
\end{align}
Then, we can find an $n \times \left( n-d \right)$ matrix $W$ such that
\begin{align}\label{eq:povm_elements_matrix}
    \begin{aligned}
        U_{\mathrm{P}}=\left[ \begin{matrix}
	V&		W\\
        \end{matrix} \right], 
    \end{aligned}
\end{align}
where $W$ can generally be generated using the Gram-Schmidt orthogonalization to ensure the unitarity of the entire matrix $U_{\mathrm{P}}$.

For $U_{\mathrm{P}}$, various libraries \cite{gadi_aleksandrowicz_2019_2562111,iten2019introduction,krol2022efficient} and methods \cite{shende2005synthesis,mottonen12006decompositions,rakyta2022approaching} are available to decompose it into a quantum circuit using a basic gate set that includes CNOT gate, where the CNOT count is our optimization target. The minimum number of CNOT gates required to implement an arbitrary unitary matrix on an $N$-qubit system is given by $\lceil \left( 4^N-3N-1 \right)/4 \rceil$ \cite{shende2004minimal}. Specifically, for a single-qubit SIC-POVM, corresponding to some $U_{\mathrm{P}}$ with $N=2$, three CNOT gates are typically needed. However, we find that the CNOT count can be optimized by implementing operations that do not affect the measurement.

\subsection{Local Equivalence Class of 2-qubit Gates}\label{sec:lec}
In this section, we introduce the local equivalence class and its determination method for 2-qubit gates, which would be useful for analyzing and reducing the CNOT count of $U_{\mathrm{P}}$, as applied in Sec.~\ref{sec:reduce}. 

Two operators $U_1,U_2\in U(4)$ are considered to be in the same local equivalence class if they can be compiled with the same minimum number of CNOT gates, denoted as $U_1\sim U_2$. There are generally two methods to determine these classes.

The first method involves obtaining the canonical class vector by performing the Cartan decomposition \cite{tucci2005introduction}. After ignoring the global phase, this decomposition provides a canonical form of $U$ for any $U \in U(4)$, that is
\begin{equation}\label{eq:kak_decom}        
\includegraphics{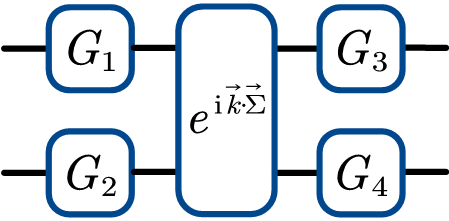},
\end{equation}
where $G_i$ is some single-qubit gate, $\vec{k}=\left[k_1\,\,k_2\,\,k_3 \right] $, and $\vec{\Sigma}=\left[\sigma _{1}^{\otimes 2}\,\,\sigma _{2}^{\otimes 2}\,\,\sigma _{3}^{\otimes 2} \right]$ with $\sigma_i$ denoting the Pauli operators $X$, $Y$, $Z$ respectively. If $\vec{k}^{(1)} = \vec{k}^{(2)}$, then $U_1\sim U_2$. The set of all canonical class vectors forms a so-called Weyl chamber, with the regime $\frac{\pi}{4}\geqslant k_1\geqslant k_2\geqslant |k_3|\geqslant 0$. Specifically, when $k_1=\frac{\pi}{4}$, one has $k_3\geqslant 0$. In the Weyl chamber, each point uniquely corresponds to a local equivalence class.

The second approach involves computing an invariant of $U$ \cite{shende2004recognizing}:
\begin{align}\label{eq:char_poly}
\begin{aligned}
\chi \left[ \gamma \left( U \right) \right] =  \det(xI-\gamma \left( U \right)),
\end{aligned}
\end{align}
where $\gamma\left( U \right)=U\sigma_{2}^{\otimes 2}U^{T}\sigma_{2}^{\otimes 2}$. For $U_1,U_2 \in SU(4)$, where $SU(4)$ represents unitary matrices with a determinant of 1 in $4$-dimensional space, if $\chi \left[ \gamma \left( U_1 \right) \right] =\chi \left[ \gamma \left( U_2 \right) \right]$, then $U_1\sim U_2$. Since Eq.~\eqref{eq:char_poly} requires $U \in SU(4)$, we can adjust the global phase to ensure all gates have a determinant of 1 when analyzing operators derived from circuit representations, such as those in Sec.~\ref{sec:reduce}. For instance, $e^{\mathrm{i}\frac{\pi}{4}}\mathrm{CNOT}$ can replace the original CNOT.

Using the above two methods, we are prepared to determine whether a specific 2-qubit gate can be compiled with fewer CNOT gates. For $U \in U(4)$, it can be decomposed with 2 CNOT gates if and only if $k_3=0$ \cite{vidal2004universal}, implying that the canonical class vector must lie on the $k_3=0$ plane of the Weyl chamber. For the case where $U$ can be compiled using a single CNOT gate, the canonical class vector should be located at $\vec{k}=\left[ \frac{\pi}{4} \ 0 \ 0 \right]$, corresponding to the CNOT gate itself, and must satisfy
\begin{align}\label{eq:chi_simpl}
\begin{aligned}
\begin{cases}
	\mathrm{tr}\left( \gamma \left( U\right) \right) =0,\\
	\mathrm{tr}\left( \gamma ^2\left( U\right) \right) =-4\det(U).\\
\end{cases}
\end{aligned}
\end{align}
The details about Eq.~\eqref{eq:chi_simpl} can be found in App.~\ref{ap:determinant}.

\section{Circuit compilation and optimization for IC-POVM}\label{sec:reduce}
In this section, we mainly focus on the compilation of the unitary $U_{\mathrm{P}}$ of arbitrary qubit-POVM described in Sec.~\ref{sec:circ_povm} with $n=4$ and $d=2$. We analyze the parameters of $U_{\mathrm{P}}$ and relate their effect to a circuit form. We find that the minimum CNOT count of single-qubit 4-element rank-1 POVMs (including minimal IC-POVMs) and SIC-POVMs are at most 2 and 1, respectively. Finally, we provide a more structured circuit model specifically for SIC-POVMs.

\subsection{Extract parameters of $U_{\mathrm{P}}$ as circuit elements}\label{sec:extract}
For a specific unitary, the minimum required number of CNOT gates is fixed. However, for a POVM, there is a set of $U_{\mathrm{P}}$ that can implement the same POVM. Such degrees of freedom in $U_{\mathrm{P}}$ that are not related to the POVM allow for a possible reduction in the CNOT count. Hereafter we first count and classify the parameters of $U_{\mathrm{P}}$.

\begin{figure}[h]
\centering
\includegraphics[width=0.48\textwidth]{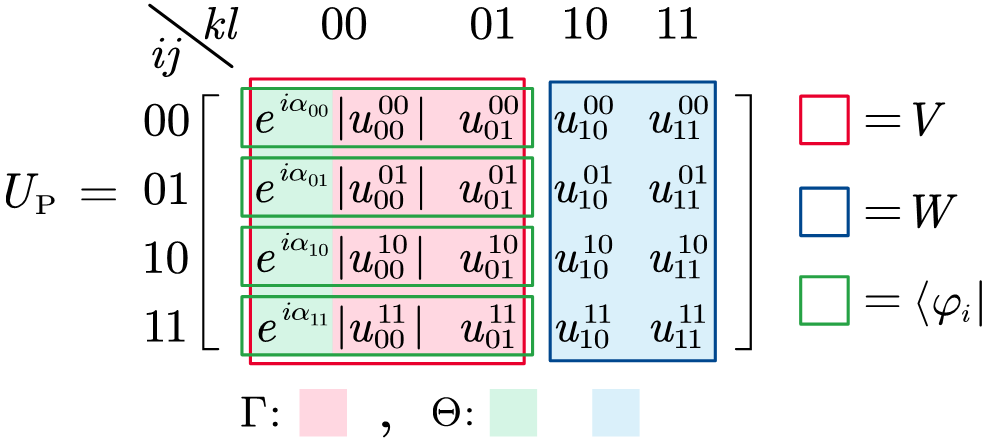}
\caption{\justifying{A detailed description of different parts of the $U_{\mathrm{P}}$ matrix. The elements of $U_{\mathrm{P}}$ are denoted as $u_{kl}^{ij}$. To emphasize the phase of each element in the first column, we write $u_{00}^{ij}=e^{\mathrm{i}\alpha _{ij}}|u_{00}^{ij}|$. The $V$ and $W$ matrices from Eq.~\eqref{eq:povm_elements_matrix} are enclosed within the red and blue line boxes, respectively. The $\bra{\varphi_i}$, a row element of $V$, is highlighted by the green line box. The matrix elements corresponding to the POVM are shaded in pink, which relates to $\Gamma$. And the non-corresponding elements are shaded in green and blue, which relate to $\Theta$.}}
\label{fig:Up_detail}
\end{figure}

The unitary $U_{\mathrm{P}}$ is some 2-qubit gate, which is determined by 16 parameters, as shown in Fig.~\ref{fig:Up_detail}. As mentioned in Sec.~\ref{sec:preli}, the density matrix of $|\varphi_i\rangle$ (defined in Eq.~\eqref{eq:V}) is essential for POVMs, corresponding to the shaded pink area in Fig.~\ref{fig:Up_detail} and determined by 8 parameters, denoted as $\Gamma$. In contrast, the $W$ matrix and the global phase of each $|\varphi_i\rangle$ in Eq.~\eqref{eq:povm_elements_matrix}, which do not affect the POVM measurement results, correspond to the shaded blue and green areas. These matrix elements are also determined by 8 parameters, denoted as $\Theta$.
Consequently, we categorize all 16 parameters of $U_{\mathrm{P}}$ into two classes based on their relevance with the POVM, that is $\Gamma$ and $\Theta$, denoted as $U_{\mathrm{P}}(\Gamma,\Theta)$. In other words, once the POVM is given, $\Gamma$ is determined, while $\Theta$ can be adjusted freely. We leave the detailed description of $\Gamma$ and $\Theta$ in App.~\ref{ap:parameterize}. 
In the following, we focus on their representations in the circuit form. 

Here, we present a circuit for constructing a general $U_{\mathrm{P}}(\Gamma,\Theta)$ as described in Eq.~\eqref{eq:povm_syn}. This circuit can adjust the free parameters $\Theta$ by adding additional gates around a specific one $U_{\mathrm{P}}(\Gamma,\Theta_0)$, where $U_{\mathrm{P}}(\Gamma,\Theta_0)$ serves as the starting point for $\Theta$ variation and is generally chosen to a fixed reference $U_{\mathrm{P}}$.
\begin{equation}\label{eq:povm_syn} 
\includegraphics{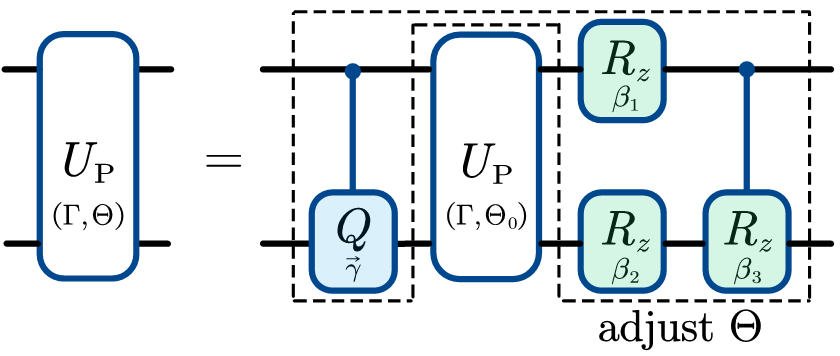}.
\end{equation}

As mentioned previously, $\Theta$ consists of 8 parameters. In this context, the controlled unitary gate $C_Q$ (highlighted in blue) can change $W$ which consists of 4 parameters. The $R_z(\beta)$ and $C_{R_z}(\beta)$ gates (highlighted in green) control the phases of each $\ket{\varphi_i}$, where $R_z(\beta)$ is defined as $R_z(\beta)=\mathrm{diag}\left( \left[ \begin{matrix}
	1&		e^{\mathrm{i}\beta}\\
\end{matrix} \right] \right)$, with $\beta$ as a phase parameter. Additionally, a global phase is not depicted in the diagram, so there are only 3 of $\beta$ in these phase gates. As a result, we can then express the change of $\Theta$ from $\Theta_0$ as 
\begin{equation}\label{eq:Theta}
\varDelta_{\Theta_0 \rightarrow \Theta} = \left[ \begin{matrix}
    \gamma _0 &		\gamma _1 &		\gamma _2 &		\gamma _3 &		\beta _1 &		\beta _2 &		\beta _3\\
\end{matrix} \right],
\end{equation}
where $\Theta_0$ is a fixed parameter set in $U_{\mathrm{P}}(\Gamma,\Theta_0)$, $\gamma_j$ represents the parameters in $Q=e^{\mathrm{i}\gamma_0}R_x(\gamma_1)R_z(\gamma_2)R_x(\gamma_3)$, and $\beta_i$ corresponds to those in $R_z$ and $C_{R_z}$.

Furthermore, let us consider an important class of IC-POVMs, say SIC-POVMs. Due to its high symmetry, we can further simplify the previous circuit and construct a more in-depth parameterized circuit model. For SIC-POVMs, we denote the corresponding unitary $U_{\mathrm{P}}$ as $U_{\mathrm{SIC}}$. Given the constraints of Eq.~\eqref{eq:sicpovm_conditions}, the number of parameters of $\Gamma$ is reduced from 8 to 3 continuous parameters and 1 discrete one. The 3 continuous parameters correspond to a single-qubit gate $U_S$, while the discrete parameter $c$ is retained in $U_{\mathrm{SIC}}(c,\Theta_0
)$, representing a reference $U_{\mathrm{SIC}}$ with a discrete parameter $c \in \{0,1\}$, as detailed in App.~\ref{ap:para_psi}. By integrating these parameters with Eq.~\eqref{eq:povm_syn}, the decomposition of $U_{\mathrm{SIC}}$ for a SIC-POVM can be further expressed as
\begin{equation}\label{eq:sicpovm_decomp} 
\includegraphics{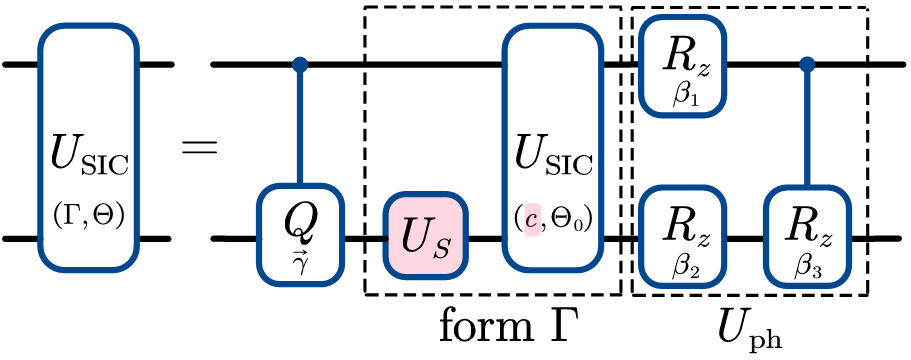}.
\end{equation}
The parameters in $\Gamma$ are highlighted in pink, where the parameter $c$ determines the row order of $U_{\mathrm{SIC}}(c,\Theta_0)$. The element labels in Fig.~\ref{fig:2sets} correspond directly to row indices of $U_{\mathrm{SIC}}(c,\Theta_0)$: `1234' for $c=0$ and `1243' for $c=1$. For simplicity, we collectively refer to the phase gates $R_z$ and $C_{R_z}$ at the circuit's end as $U_{\mathrm{ph}}$, as shown in Eq.~\eqref{eq:sicpovm_decomp}. In the following section, we take Set 1 introduced in Fig.~\ref{fig:2sets} as the reference SIC-POVM, with its corresponding unitary denoted as $U_{\mathrm{SIC-1}}(c,\Theta_1)$ as defined in Eq.~\eqref{eq:uk}.

\subsection{Reduce CNOT Count of Minimal IC-POVMs}\label{sec:reduceic}
In Sec.~\ref{sec:extract}, we classify the parameters of $U_{\mathrm{P}}$ and connect their effects to the circuit form. This framework enables us to reduce the CNOT count by adjusting the circuit gates rather than directly modifying the matrix elements of $U_{\mathrm{P}}$. To facilitate these adjustments, we outline some fundamental circuit identities and results. The first one is the decomposition of single-qubit unitaries \cite{shende2004minimal}, where any $U \in SU(2)$ can be decomposed as
\begin{equation}\label{eq:single_rrr} 
\includegraphics{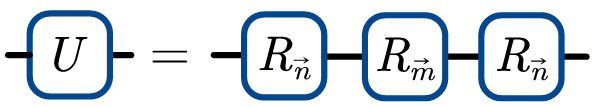},
\end{equation}
with $\vec{n},\vec{m} \in \mathbb{R}^{3}$, and $\vec{n} \bot \vec{m}$. In different tasks, we will choose different forms for convenience. And for controlled-unitary gate, a practical decomposition is given by
\begin{equation}\label{eq:controlled_u_decomp} 
\includegraphics{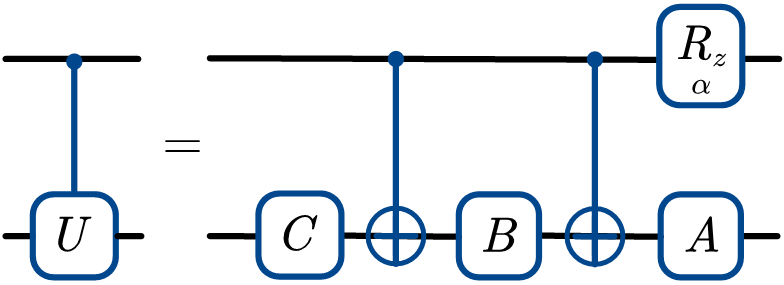},
\end{equation}
where $U=e^{\mathrm{i}\alpha}AXBXC$, $ABC=I$ \cite{barenco1995elementary}. Note that we also ignore the global phase here. Next is the basic circuit identity for controlled unitary gates, for which we have
\begin{equation}\label{eq:control_u_cross} 
\includegraphics{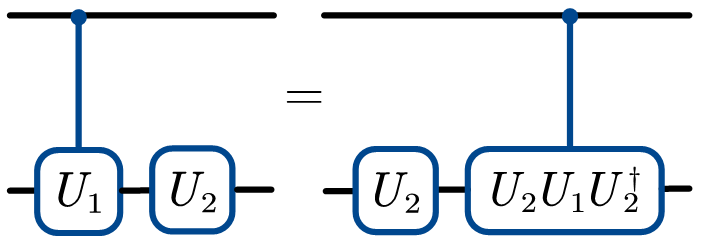}.
\end{equation}
This identity can be derived using block matrix multiplication. In particular, the $R_z$ gate on the control qubit commutes with the CNOT gate, and the $R_x$ gate commutes with the CNOT gate when applied to the controlled qubit. Lastly, we recall %will utilize 
Proposition \uppercase\expandafter{\romannumeral5}.2 in Ref.~\cite{shende2004minimal}, which we restate here for convenience, that is, 
for any $U \in SU(4)$, one can find $\xi _1,\xi _2,\xi _3$, such that
\begin{align}\label{eq:equivl}
\begin{aligned}
&\chi \left[ \gamma \left( UC_{X}\left( I \otimes R_z\left( \xi _1 \right) \right) C_{X} \right) \right] \\ 
=&\chi \left[ \gamma \left( C_{X}\left( R_x\left( \xi _2 \right) \otimes R_z\left( \xi _3 \right) \right) C_{X} \right) \right].
\end{aligned}
\end{align}
Here, $C_X$ refers to the CNOT gate, and $\chi \left[ \gamma \left( U \right) \right]$ (defined in Eq.~\eqref{eq:char_poly}) can be used to determine if two unitary gates can be compiled under the same CNOT count. By observing the structure of the left-hand side of Eq.~\eqref{eq:equivl}, we can decompose $C_Q$ to match the required form. Since the parameters embedded in $C_Q$ can be freely adjusted, we can always identify a $U_{\mathrm{P}}$ that requires only 2 CNOT gates by applying Eq.~\eqref{eq:equivl}, which is shown as follows.

\begin{prop}\label{prop:povm_2cnot}
    For any $U_{\mathrm{P}}$ of single-qubit 4-element rank-1 POVMs, one can always
    find a suitable $\Theta$ by optimizing the circuit using Eq.~\eqref{eq:povm_syn}, such that the final circuit can be compiled with only 2 CNOT gates.
\end{prop}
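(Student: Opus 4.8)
The plan is to use only the controlled gate $C_Q$ in the circuit of Eq.~\eqref{eq:povm_syn} (fixing the phase gates $U_{\mathrm{ph}}$ to their reference values) and to massage the resulting gate into the normal form that Eq.~\eqref{eq:equivl} certifies is two-CNOT compilable. The starting observation is that the POVM depends only on the first $d=2$ columns $V$ of $U_{\mathrm{P}}$, so right-multiplication by a controlled gate $C_Q=\mathrm{diag}(I_2,Q)$ with control on the auxiliary qubit leaves those columns, hence $\{\Pi_i\}$, untouched while acting only on the block $W$. Every member of the family can therefore be taken as $U_{\mathrm{P}}(\Gamma,\Theta)=U_{\mathrm{P}}(\Gamma,\Theta_0)\,C_Q$ with $Q\in U(2)$ free, and it suffices to exhibit a single $Q$ for which this product lies in the two-CNOT local equivalence class.

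First I would normalize the global phase so that $U:=U_{\mathrm{P}}(\Gamma,\Theta_0)\in SU(4)$ and apply Eq.~\eqref{eq:equivl} to this fixed gate, obtaining an angle $\xi_1$ for which $U\,C_X\,(I\otimes R_z(\xi_1))\,C_X$ shares the invariant $\chi[\gamma(\cdot)]$ with the manifestly two-CNOT gate $C_X(R_x(\xi_2)\otimes R_z(\xi_3))C_X$; by the criterion of Sec.~\ref{sec:lec} this places the former in the two-CNOT class. The task is then reduced to realizing exactly the factor $C_X(I\otimes R_z(\xi_1))C_X$ through the free $C_Q$, up to local gates.

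Next I would decompose $C_Q$ via Eq.~\eqref{eq:controlled_u_decomp} and choose $A=I$, $B=R_z(\xi_1)$, $C=R_z(-\xi_1)$, which satisfies $ABC=I$ and corresponds to a diagonal $Q$ that is realizable by the free parameters $\gamma$. This gives $C_Q=P^{(c)}_\alpha\,C_X\,(I\otimes R_z(\xi_1))\,C_X\,(I\otimes R_z(-\xi_1))$, where $P^{(c)}_\alpha=\mathrm{diag}(1,e^{\mathrm{i}\alpha})$ is a phase gate on the control qubit. Since an $R_z$ on the control commutes with $C_X$ (the identity noted below Eq.~\eqref{eq:control_u_cross}), I can commute $P^{(c)}_\alpha$ past both CNOTs and merge it with $(I\otimes R_z(-\xi_1))$ into a single local gate; stripping this terminal local gate does not change the minimal CNOT count. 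Hence
\begin{align*}
U_{\mathrm{P}}(\Gamma,\Theta)\ \sim\ U_{\mathrm{P}}(\Gamma,\Theta_0)\,C_X\,(I\otimes R_z(\xi_1))\,C_X,
\end{align*}
which by the previous paragraph is two-CNOT compilable, establishing the claim.

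The step I expect to be the main obstacle is the bookkeeping that avoids a circular dependence: the angle $\xi_1$ produced by Eq.~\eqref{eq:equivl} depends on the gate sitting to its left, so I must guarantee that this gate is the fixed reference $U_{\mathrm{P}}(\Gamma,\Theta_0)$ rather than something that itself depends on $\xi_1$. Choosing $A=I$ and commuting the control phase $P^{(c)}_\alpha$ entirely through the two CNOTs before discarding it is exactly what decouples the two, and the only external facts I rely on are the local-invariance of the minimal CNOT count, the control-$R_z$/CNOT commutation, and the freedom to select any $Q\in U(2)$ without perturbing the POVM.
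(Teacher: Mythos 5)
Your proof is correct and follows essentially the same route as the paper: decompose $C_Q$ via Eq.~\eqref{eq:controlled_u_decomp}, isolate a free $R_z(\xi_1)$ between the two CNOTs, discard the local gates that do not affect the equivalence class, and invoke Eq.~\eqref{eq:equivl} to pick $\xi_1$. The only difference is that you fix $A=I$, $B=R_z(\xi_1)$, $C=R_z(-\xi_1)$ (a diagonal $Q$) from the outset, whereas the paper keeps $A,B,C$ general and writes $B=R_xR_zR_x$ before commuting the $R_x$ factors outward --- your choice is a clean special case that also neatly dispels the potential circularity you flagged.
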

\begin{proof}
Firstly, we can always decompose the $C_{Q}$ in Eq.~\eqref{eq:povm_syn} using Eq.~\eqref{eq:controlled_u_decomp}. By observing the construction of Eq.~\eqref{eq:equivl}, we can further decompose $B$ using Eq.~\eqref{eq:single_rrr}, where we choose $R_x$ and $R_z$ to replace $R_{\vec{n}}$ and $R_{\vec{m}}$, respectively. Then, leveraging the circuit identities mentioned in Eq.~\eqref{eq:control_u_cross}, we can move $R_x$ gates across the CNOT gate and merge them with $A$ and $C$. This process ultimately yields
\begin{equation}\label{eq:povm_prof} 
\includegraphics{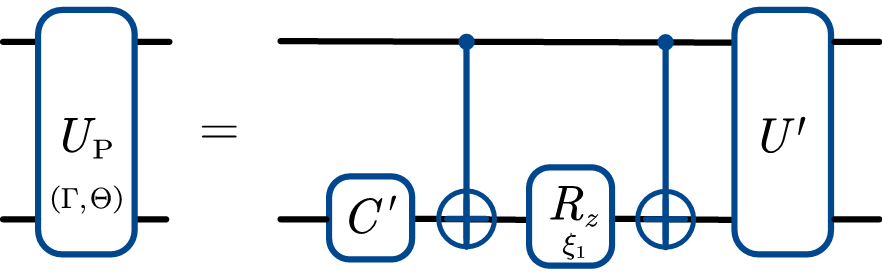},
\end{equation}
where
\begin{equation}
\includegraphics{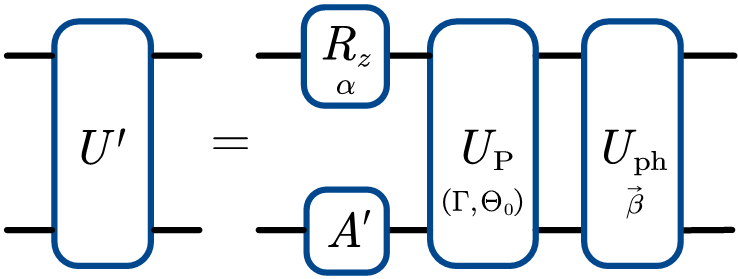}.
\end{equation}
The single-qubit gate $C^{\prime}$ does not affect local equivalence. Given the freedom to choose $\Theta$, a $\xi _1$ can always be found to satisfy Eq.~\eqref{eq:equivl}. Thus, with this chosen $\Theta$, denoted as $\Theta^{*}$, a new $U_{\mathrm{P}}$ can be obtained, which can be implemented using 2 CNOT gates, that is
\begin{equation}
\includegraphics{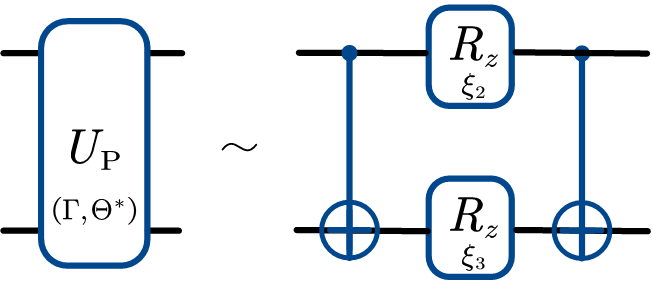}.
\end{equation}
\end{proof}

The role of $C_{Q}$ is crucial in this process, as we adjust the parameters $\gamma_i$ embedded in $C_Q$ to apply the Eq.~\eqref{eq:equivl}. In Fig.~\ref{fig:povm_cors} (a) and (b), we numerically find the solution of 2 CNOT gates by adjusting $C_Q$. To illustrate this process, we perform the Cartan decomposition to obtain the canonical class vectors (the first approach in Sec.~\ref{sec:lec}) and observe their trajectory as $\gamma_i$ changes.
We select $U_{\mathrm{SIC-1}}(c=0,\Theta_1)$ (with $c=0$ in Eq.~\eqref{eq:uk} of App.~\ref{ap:parameterize}) as the initial $U_{\mathrm{P}}$, denoted as $U_{\mathrm{P}}(\Gamma_1,\Theta_1)$, for our numerical experiments. We set $\beta_1,\beta_2,\beta_3=0$ in $U_{\mathrm{ph}}$ and utilize only the degrees of freedom in $Q$ for optimization.

In Fig.~\ref{fig:povm_cors}(a), by changing $C_Q$, the possible canonical class vectors form a cuboid within the Weyl chamber, intersecting the $k_3 = 0$ plane. It implies that while an infinite number of $U_{\mathrm{P}}$ can implement the same POVM, the probability of randomly generating one that requires only 2 CNOT gates is almost zero. In Fig.~\ref{fig:povm_cors}(b), we only adjust the parameter $\gamma_2$ in $C_Q$. The red point in the graph represents the initial $U_{\mathrm{P}}(\Gamma_1,\Theta_1)$ that we constructed, which requires 3 CNOT gates for compilation. By adjusting $\gamma_2$, we generate a continuous trajectory that intersects with $k_3=0$. By employing the dichotomy method, we can locate a $U_{\mathrm{P}}(\Gamma_1,\Theta^{*})$ that can be compiled with 2 CNOT gates.

\subsection{Further Reduce CNOT Count of SIC-POVM}\label{sec:sic_reduce}
For SIC-POVM, owing to the more refined structure presented in Eq.~\eqref{eq:sicpovm_decomp}, it can be demonstrated that there is always a $U_{\mathrm{SIC}}$ which can invariably be constructed using 1 CNOT gate by constructing Eq.~\eqref{eq:povm_syn} and adjusting $\Theta$.

\begin{figure*}[!ht]
\centering
\includegraphics[width=0.8\textwidth]{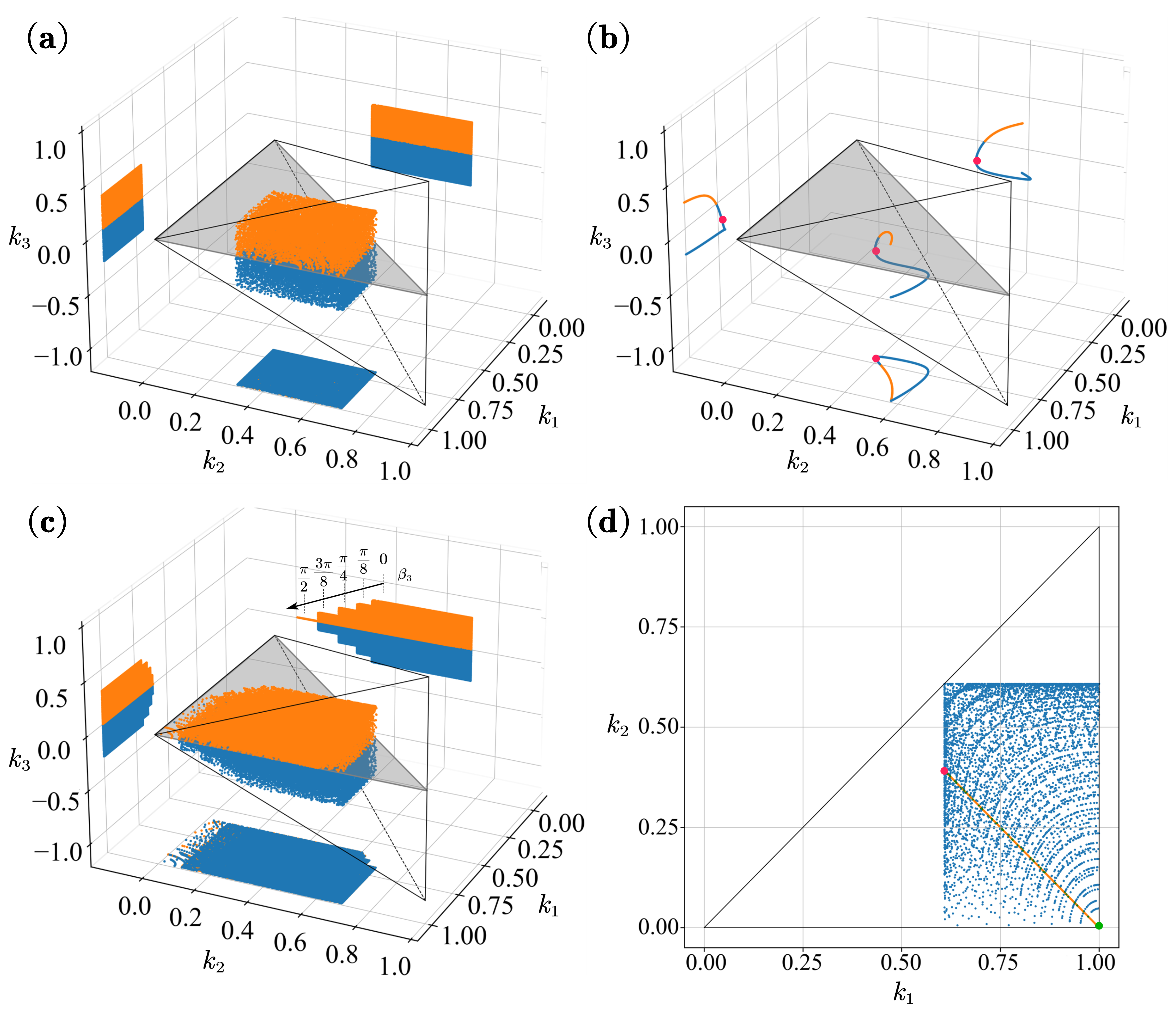}
\caption{\justifying{Searching for the solution of $U_{\mathrm{P}}(\Gamma,\Theta^{*})$ and $U_{\mathrm{SIC}}(c,\Theta_m^*)$ by varying $\Theta$ in Eq.~\eqref{eq:povm_syn}. The initial $U_{\mathrm{P}}$ and the reference $U_{\mathrm{SIC}}(c,\Theta_0)$ are both selected as $U_{\mathrm{SIC-1}}(c=0,\Theta_1)$ (with $c=0$ in Eq.~\eqref{eq:uk} of Appendix).
% constructed using a SIC-POVM as referenced in \red{the main text (Eq.(?))}
In (a)-(c), orange points represent the canonical vector with $k_3 \geqslant 0$, while blue points represent those with $k_3 < 0$. The values are scaled and normalized by $\frac{\pi}{4}$. (a) By changing the parameters in $C_Q$, the possible canonical class vectors form a cuboid within the Weyl chamber, intersecting the plane $k_3 = 0$. (b) A possible trajectory when varying only $\gamma_2$ in $C_Q$, with the red point indicating the initial $U_{\mathrm{P}}$. (c) The canonical vector trajectory formed by varying $C_Q$ under different values of $\beta_3$. As $\beta_3$ approaches $\frac{\pi}{2}$, the trajectory cuboid aligns more closely with the $k_3=0$ plane. (d) When $\beta_3=\frac{\pi}{2}$, the canonical class vectors lie entirely on the $k_3=0$ plane. The orange line shows a trajectory while adjusting $\gamma_1$ in $C_Q$. And the green point represents the canonical class vector corresponding to the CNOT gate.}}
\label{fig:povm_cors}
\end{figure*}

\begin{prop}\label{prop:sicpovm_1cnot}
    For any $U_{\mathrm{SIC}}$ of single-qubit SIC-POVMs, one can always
    find a suitable $\Theta$ by optimizing the circuit using Eq.~\eqref{eq:povm_syn}, such that the final circuit can be compiled with only 1 CNOT gate.
\end{prop}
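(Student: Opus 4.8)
The plan is to leverage the refined SIC parameterization of Eq.~\eqref{eq:sicpovm_decomp}, which trades the eight free parameters of a generic rank-1 POVM for a single-qubit gate $U_S$, the discrete label $c$, and the phase gates $U_{\mathrm{ph}}$. Because $U_S$ and the surrounding single-qubit pieces do not alter the local-equivalence class, the whole burden of reaching the single-CNOT class---the corner $\vec{k}=[\frac{\pi}{4}\ 0\ 0]$ of the Weyl chamber---falls on the controlled gate $C_Q$ together with the phase angles $\beta_i$. My strategy is to reserve two of these, namely the phase $\beta_3$ in $U_{\mathrm{ph}}$ and the rotation angle $\gamma_1$ inside $C_Q$: fixing $\beta_3=\frac{\pi}{2}$ will collapse the entire reachable set onto the $k_3=0$ face of the chamber (the phenomenon plotted in Fig.~\ref{fig:povm_cors}(d)), after which a single sweep of $\gamma_1$ drives the canonical vector to the CNOT corner.

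First I would run the reduction of Proposition~\ref{prop:povm_2cnot} through the SIC circuit: decompose $C_Q$ by Eq.~\eqref{eq:controlled_u_decomp}, commute the resulting $R_x$ factors across the CNOTs using Eq.~\eqref{eq:control_u_cross}, and absorb every single-qubit gate (including $U_S$) into the local wings, which are irrelevant for $\gamma(U)$. The analytic handle is the explicit eigenphase form of the invariant: for a gate with canonical vector $\vec{k}$ one has $\mathrm{tr}(\gamma(U))=4\cos(2k_1)\cos(2k_2)\cos(2k_3)+4\mathrm{i}\sin(2k_1)\sin(2k_2)\sin(2k_3)$, and the analogous expression for $\mathrm{tr}(\gamma^2(U))$ is obtained by doubling every argument. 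I would then show that substituting $\beta_3=\frac{\pi}{2}$ into the SIC reference $U_{\mathrm{SIC-1}}(c,\Theta_1)$ forces $k_3=0$ for every choice of $C_Q$; this is the key structural step, and it is exactly what pins the trajectories to the $k_3=0$ plane in Fig.~\ref{fig:povm_cors}(c)--(d).

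On the $k_3=0$ face both invariants become real, and (fixing the global phase so that $U\in SU(4)$) the second line of Eq.~\eqref{eq:chi_simpl} reads $\mathrm{tr}(\gamma^2(U))=4\cos(4k_1)\cos(4k_2)=-4\det(U)=-4$. Since $0\leqslant k_2\leqslant k_1\leqslant\frac{\pi}{4}$ places both $4k_1$ and $4k_2$ in $[0,\pi]$, the product of cosines equals $-1$ only when $\cos(4k_1)=-1$ and $\cos(4k_2)=1$, i.e.\ $(k_1,k_2)=(\frac{\pi}{4},0)$---precisely the CNOT class, at which the first line of Eq.~\eqref{eq:chi_simpl} then holds automatically. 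Thus a single real equation in $\gamma_1$ remains, and I would finish with a continuity argument: $\mathrm{tr}(\gamma^2(U_{\mathrm{SIC}}))$ is a continuous function of $\gamma_1$ whose range includes its extreme value $-4$ (attained at the corner, which Fig.~\ref{fig:povm_cors}(d) exhibits as reachable along the $\gamma_1$ sweep), so some $\gamma_1^{*}$ realizes it. The corresponding $\Theta^{*}$ then yields a $U_{\mathrm{SIC}}$ in the single-CNOT class.

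The main obstacle is the structural claim that $\beta_3=\frac{\pi}{2}$ enforces $k_3=0$ uniformly in $C_Q$: this is where the rigid SIC overlaps of Eq.~\eqref{eq:sicpovm_conditions}, and hence the special fiducial row structure of $U_{\mathrm{SIC-1}}(c,\Theta_1)$, must be used essentially, since the same substitution does not flatten the reachable cuboid for a generic rank-1 POVM (which is why Proposition~\ref{prop:povm_2cnot} stalls at two CNOTs). A secondary point to verify is that the extreme value $-4$ is genuinely attained---not merely approached---along the one-parameter $\gamma_1$ family, and that both discrete choices $c\in\{0,1\}$, together with the row reordering they induce, are covered, so that the conclusion holds uniformly over every single-qubit SIC-POVM.
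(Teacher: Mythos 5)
Your overall strategy mirrors the paper's: extract $U_S$ (which is locally irrelevant) so that everything reduces to the reference $U_{\mathrm{SIC-1}}(c,\Theta_1)$, and then tune $\beta_3$ and the $C_Q$ parameters until the invariant conditions of Eq.~\eqref{eq:chi_simpl} for the single-CNOT class are met. Your endgame analysis on the $k_3=0$ face is also sound: with $4k_1,4k_2\in[0,\pi]$ and the Weyl-chamber ordering, $\mathrm{tr}(\gamma^2)=-4$ does force $(k_1,k_2)=(\frac{\pi}{4},0)$, and $\mathrm{tr}(\gamma)=0$ then holds automatically. However, the two steps you yourself flag as ``obstacles'' are precisely the content of the proof, and neither is actually established. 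First, the structural claim that $\beta_3=\frac{\pi}{2}$ pins $k_3=0$ for \emph{every} $C_Q$ is asserted on the strength of Fig.~\ref{fig:povm_cors}(c)--(d), which is numerical illustration, not an argument; you correctly identify that the SIC overlap conditions must enter here, but you do not show how. Second, and more seriously, your closing continuity argument cannot work as stated: $-4$ is the \emph{extreme} value of $\mathrm{tr}(\gamma^2)$ on the $k_3=0$ face, so the intermediate value theorem can only show values \emph{between} attained ones are attained --- it cannot deliver an endpoint of the range. Your justification that the range ``includes'' $-4$ appeals to the corner being ``reachable along the $\gamma_1$ sweep'' as exhibited in the figure, which is exactly the statement to be proven; the argument is circular.

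The paper closes both gaps at once by simply exhibiting the explicit solution $\varDelta_{\Theta_1\rightarrow\Theta_m^*}(c)=[\,0\ \ \pi\ \ 0\ \ 0\ \ 0\ \ 0\ \ (-1)^c\frac{\pi}{2}\,]$ (i.e.\ $Q=R_x(\pi)$ up to phase together with $\beta_3=(-1)^c\frac{\pi}{2}$) and verifying that the resulting $U_{\mathrm{SIC-1}}(c,\Theta_m^*)$ satisfies Eq.~\eqref{eq:chi_simpl}; no flattening lemma or attainment argument is then needed. Note also that the paper's sign of $\beta_3$ depends on $c$, whereas you fix $\beta_3=\frac{\pi}{2}$ for both values of $c$ and defer the check --- this is a further loose end. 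To repair your proof along your own lines you would need either (i) a direct computation of $\chi[\gamma(\,\cdot\,)]$ for the concrete matrix $U_{\mathrm{SIC-1}}(c,\Theta_1)$ of Eq.~\eqref{eq:uk} as a function of $(\gamma_1,\beta_3)$, proving both the flattening and the attainment, or (ii) simply to guess and verify a specific parameter point, which is what the paper does.
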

\begin{proof}
A given $U_{\mathrm{SIC}}(\Gamma,\Theta)$ can always be decomposed in the form of Eq.~\eqref{eq:sicpovm_decomp}. The essential parameter $\Gamma$ of the SIC-POVM depends on $U_S$ and $U_{\mathrm{SIC}}(c,\Theta_0)$, where $U_S$ represents the single-qubit gate. Using Eq.~\eqref{eq:povm_syn}, one can modify the free parameter $\Theta$ by introducing additional  $C_{Q_2}$ and $U_{\mathrm{ph}2}$, as shown in the first line of Eq.~\eqref{eq:sic_prof}. 
\begin{equation}\label{eq:sic_prof} 
\includegraphics{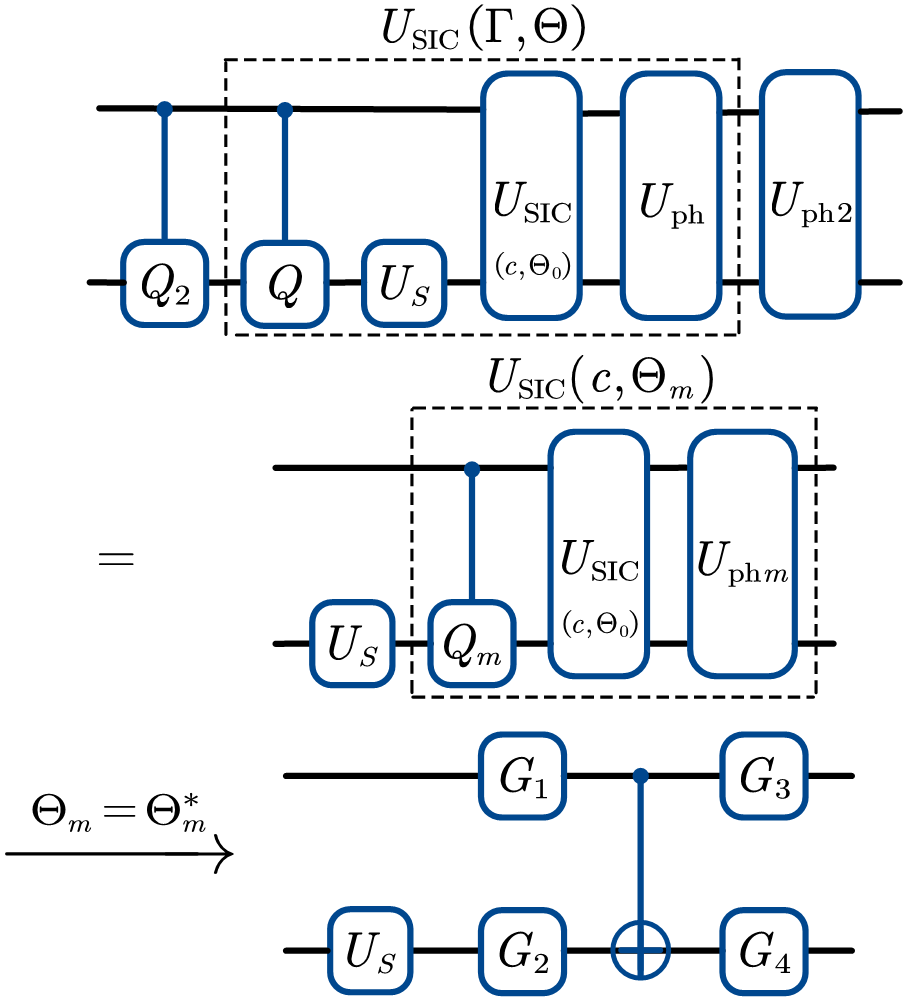}.
\end{equation}

Since $U_S$ can always be exchanged with the controlled-unitary operations as explained in Eq.~\eqref{eq:control_u_cross}, one can move it to the beginning of the circuit. And we further merge the controlled operations as $C_{Q_m}$, as shown in the second line of Eq.~\eqref{eq:sic_prof}. After $U_S$ is moved outside, we ignore it and denote all the parameters in the merged circuit as $\Theta_m$, referring to the merged circuit as $U_{\mathrm{SIC}}(c,\Theta_m)$.

Finally, we choose $U_{\mathrm{SIC-1}}(c,\Theta_1)$ in Eq.~\eqref{eq:uk} as the $U_{\mathrm{SIC}}(c,\Theta_0)$. By solving Eq.~\eqref{eq:chi_simpl} while adjusting $\Theta_m$, we find that there exists a solution of 1 CNOT for any $c$, that is
\begin{align}\label{eq:thetamstar}
\begin{aligned}
\varDelta_{\Theta_1 \rightarrow \Theta_m^*}(c)=\left[ \begin{matrix}
	0&		\pi&		0&		0&		0&		0&		\left( -1 \right) ^c\frac{\pi}{2}\\
\end{matrix} \right],
\end{aligned}
\end{align}
with the meaning of each parameter defined in Eq.~\eqref{eq:Theta}. Then, the final unitary is $U_{\mathrm{SIC-1}}(c,\Theta_m^*)$. After obtaining $\Theta_m^*$, the merged circuit can be decomposed as in the third line of Eq.~\eqref{eq:sic_prof}, where $G_{i}$ is a determined single-qubit gate. This completes the proof.
\end{proof}

In Fig.~\ref{fig:povm_cors} (c) and (d),
we again use $U_{\mathrm{SIC-1}}(c=0,\Theta_1)$ to demonstrate the 1 CNOT solution for the SIC-POVM. This is done by tracing the trajectory of the canonical class vectors as $\Theta$ varies, using the dichotomy method, which is similar to the approach in Fig.~\ref{fig:povm_cors} (a) and (b). We vary the $\beta_3$ values (noting that $\beta_1$ and $\beta_2$ do not affect the canonical class vector) and proceed to adjust the parameter $\gamma_i$. As illustrated in Fig.~\ref{fig:povm_cors}(c), when $\beta_3$ approaches $\frac{\pi}{2}$, the overall trajectory increasingly aligns with the $k_3=0$ plane. When $\beta_3=\frac{\pi}{2}$, the canonical class vectors lie entirely on the plane $k_3=0$. Consequently, regardless of variations in $\gamma_i$, compiling the corresponding $U_{\mathrm{SIC}}$ requires no more than 2 CNOT gates. In Fig.~\ref{fig:povm_cors}(d), we set $\beta_3=\frac{\pi}{2}$ to observe this special case. By adjusting $\gamma_1$ in $C_Q$, we obtained the orange trajectory, which contains the green point $\vec{k}=\left[ \begin{matrix}
	\frac{\pi}{4}&		0&		0\\
\end{matrix} \right] $, representing the canonical class vector of the CNOT gate.

\begin{figure*}[ht]
\centering
\includegraphics[width=\textwidth]{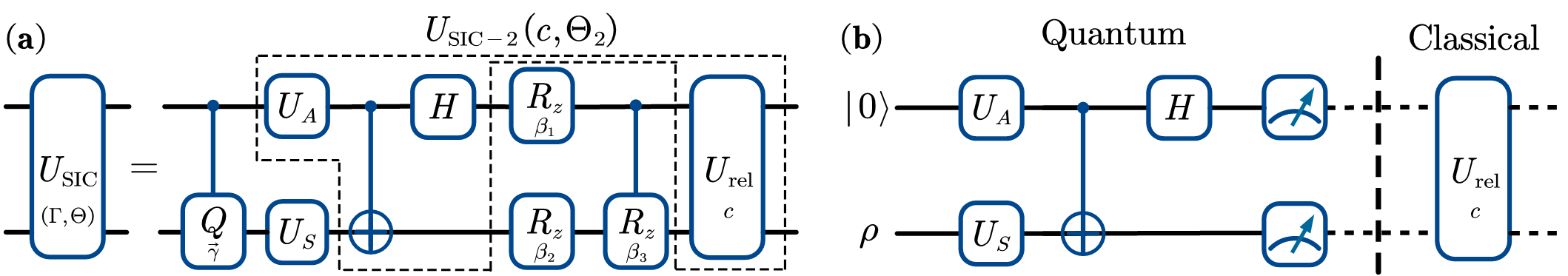}
\caption{\justifying{General and practical circuits for $U_{\mathrm{SIC}}$. (a) The general circuit is derived using $U_{\mathrm{SIC-2}}(c,\Theta_2)$ as the reference $U_{\mathrm{SIC}}$, and explicitly compiling it with the Bell measurement circuit and the relabeling operator $U_{\mathrm{rel}}$. The relabeling operator $U_{\mathrm{rel}}$ corresponds directly to the parameter $c$. This general circuit can compile any valid $U_{\mathrm{SIC}}$ by appropriately choosing $\Gamma$ and $\Theta$, with its unitary representation being identical to the original $U_{\mathrm{SIC}}$. (b) The practical circuit is obtained by implementing $U_{\mathrm{rel}}$ in the classical post-processing stage and employing Eq.~\eqref{eq:sic_prof} to eliminate the $U_{\mathrm{ph}}$ and $C_Q$. This circuit efficiently implements a SIC-POVM with just 1 CNOT. The components that vary for different SIC-POVMs are $U_S$ and $c$, which can be determined using Algo.~\ref{algo:algo1}.}}
\label{fig:prac_circ}
\end{figure*}

\subsection{Practical Circuit for SIC-POVMs}\label{sec:prac}
In the previous Sec.~\ref{sec:sic_reduce}, we obtained $\Theta_m^*$ in Eq.~\eqref{eq:thetamstar} and the corresponding 1-CNOT solution using the numerical dichotomy method. However, further steps are required to fully compile an arbitrary SIC-POVM. Specifically, we need to explicitly compile $U_{\mathrm{SIC}}(c,\Theta_m^*)$ in the third line of Eq.~\eqref{eq:sic_prof} and determine $U_S$ of $\Gamma$.

While $U_{\mathrm{SIC}}(c,\Theta_m^*)$ can be directly compiled into a 1-CNOT circuit using libraries like Qiskit \cite{gadi_aleksandrowicz_2019_2562111}, there is a more structured strategy for the explicit compilation using Bell measurements. In the literature, Ref.~\cite{jiang2020optimal,galvis2023single} implements SIC-POVM Set 2 in Fig.~\ref{fig:2sets} using a Bell measurement-based circuit. We denote the unitary of this
circuit as $U_{\mathrm{SIC-2}}(\Gamma_2,\Theta_2)$, where $\Theta_2$ is already a 1-CNOT solution. Hereafter, we show the connection of that solution to our method and then give a general construction of arbitrary SIC-POVM.

First, by extracting a suitable $U_S$ from $U_{\mathrm{SIC-2}}(\Gamma_2,\Theta_2)$, the remaining part becomes $U_{\mathrm{SIC-1}}(c=1,\Theta_2)$. Then, we find that the difference between $\Theta_2$ and $\Theta_m^*$ shows
\begin{align}\label{eq:solutiondiff}
\begin{aligned}
\varDelta_{\Theta_2 \rightarrow \Theta_m^*}(c=1)=\left[ \begin{matrix}
	\frac{2}{3}\pi&		0&		0&		0&		0&		0&		0\\
\end{matrix} \right],
\end{aligned}
\end{align}
where the only difference is the first parameter, $\gamma_0$, representing the phase of $Q$. Indeed, this phase can be realized by a single-qubit $R_{z}(\frac{2}{3}\pi)$ gate on the auxiliary qubit and can be ignored as it acts on the initial $\ket{0}$. As a result, the Bell measurement-based circuit is a special case within our framework. Given the fixed structure of the Bell measurement circuit, which simplifies explicit compilation, we choose $U_{\mathrm{SIC-2}}(\Gamma_2,\Theta_2)$ (also denoted as $U_{\mathrm{SIC-2}}(c=1,\Theta_2)$) as the reference $U_{\mathrm{SIC}}$, instead of $U_{\mathrm{SIC-1}}(c,\Theta_m^*)$ we used in Sec.~\ref{sec:sic_reduce}.

The Bell measurement-based circuit is directly applicable only when $c=1$, so further adjustment is needed to compile $U_{\mathrm{SIC}}$ when $c=0$. To maintain a fixed circuit structure while varying $c$, we introduce an additional operator $U_{\mathrm{rel}}$, which is a CNOT gate for $c=0$ or an identity gate for $c=1$. Details of the relabeling operators are provided in App.~\ref{ap:relabel_op}. This modification leads to a general fixed-structure circuit applicable to any $U_{\mathrm{SIC}}$ as shown in Fig.~\ref{fig:prac_circ}(a). In this circuit, $U_{A}$ initializes the state $|0\rangle$ into a specific state $|\psi\rangle$, with its conjugate $|\psi^*\rangle$ serving as the fiducial state \cite{feng2022stabilizer,saraceno2017phase} for Set 2 in Fig.~\ref{fig:2sets}
% with any arbitrary phase $e^{\mathrm{i}\alpha}|f\rangle$ 
(detailed in App.~\ref{ap:sic_bell}). The parameters in $U_S$ and $c$ are determined by the given $U_{\mathrm{SIC}}$. 
A simple process for determining these parameters is outlined in Algo.~\ref{algo:algo1} with further details in App.~\ref{ap:algo1}.
\begin{algorithm}[H]
    \caption{Parameters for Practical Circuit}\label{algo:algo1}
    \begin{algorithmic}[1]
    \Require
    Target $U_{\mathrm{SIC}}$, denoted as $U_{\mathrm{SIC}}^{(1)}$; $U_{\mathrm{SIC-2}}(c=1,\Theta_2)$, denoted as $U_{\mathrm{SIC}}^{(2)}$
    \Ensure
    $U_S$, $c$
    \State Extract $V^{(i)}=\left[ \begin{matrix} |\varphi _1^{(i)}\rangle& |\varphi _2^{(i)}\rangle& |\varphi _3^{(i)}\rangle& |\varphi _4^{(i)}\rangle\ \end{matrix} \right]^{\dagger}$ from $U_{\mathrm{SIC}}^{(i)}$, with $i \in \left\{ 1, 2 \right\}$, as described in Eq.~\eqref{eq:povm_elements_matrix}
    \State Normalize: $|\phi _j^{(i)}\rangle=\frac{1}{\sqrt{2}}|\varphi _j^{(i)}\rangle$, with $j \in \left\{ 1,2,3,4\right\}$
    \State Compute $U^{(i)}=|\phi _{1}^{(i)}\rangle \langle 0|+|\phi _{1}^{(i)\bot}\rangle \langle 1|$
    \State Compute and rewrite: 
    
    $\langle \phi _{2}^{(i)}|U^{(i)}=ae^{\mathrm{i}\alpha _{i1}}\langle 0|+\sqrt{1-a^2}e^{\mathrm{i}\alpha _{i2}}\langle 1|$, $a \in \mathbb{R}$.
    \State Get $U_S=U^{(2)}U_{r}(U^{(1)})^{\dagger}$, where 
    
    $U_r=|0\rangle \langle 0|+e^{\mathrm{i}\left( \alpha _{12}-\alpha _{22}-\alpha _{11}+\alpha _{21} \right)}|1\rangle \langle 1|$
    \State Compute $U_{\mathrm{pr}}=V_{\mathrm{down}}^{(1)}(V_{\mathrm{down}}^{(2)}U_{S})^{-1}$, where 
    
    $V_{\mathrm{down}}^{(i)}=\left[ \begin{matrix} |\varphi _3^{(i)}\rangle& |\varphi _4^{(i)}\rangle\\ \end{matrix} \right]^{\dagger}$
    \If {$U_{\mathrm{pr}}$ is diagonal}
    \State Get $c=1$ (i.e., $U_{\mathrm{rel}}=I$)
    \Else 
    \State Get $c=0$ (i.e., $U_{\mathrm{rel}}=\mathrm{CNOT}$)
    \EndIf
    \end{algorithmic}
\end{algorithm}

In fact, the relabeling operation in Fig.~\ref{fig:prac_circ}(a) can be implemented through classical post-processing \cite{liu2022classically}. Based on Eq.~\eqref{eq:povm_syn}, $C_Q$, $R_z$, and $C_{R_z}$ gates can be eliminated. As a result, we finally get the simplified circuit as shown in Fig.~\ref{fig:prac_circ}(b).

In real quantum devices, such 1-CNOT implementation in Fig.~\ref{fig:prac_circ}(b) offers two advantages. First, it reduces the CNOT count to just one, thereby significantly minimizing the impact of CNOT gate noise. 
This method demonstrates superior efficiency compared to other methods, which typically necessitate three CNOT gates \cite{garcia2021learning,oszmaniec2019simulating,garcia2023experimentally}. Second, this implementation offers a general and explicit circuit structure that can be adapted for any single-qubit SIC-POVM with minimal circuit modifications, i.e., requiring only the adjustment of $U_S$. 
Such flexibility makes it more feasible to integrate POVM-related algorithms into real quantum circuits, enhancing the adjustability and practicality.

\begin{figure*}[ht]
\centering
\includegraphics[width=0.8\textwidth]{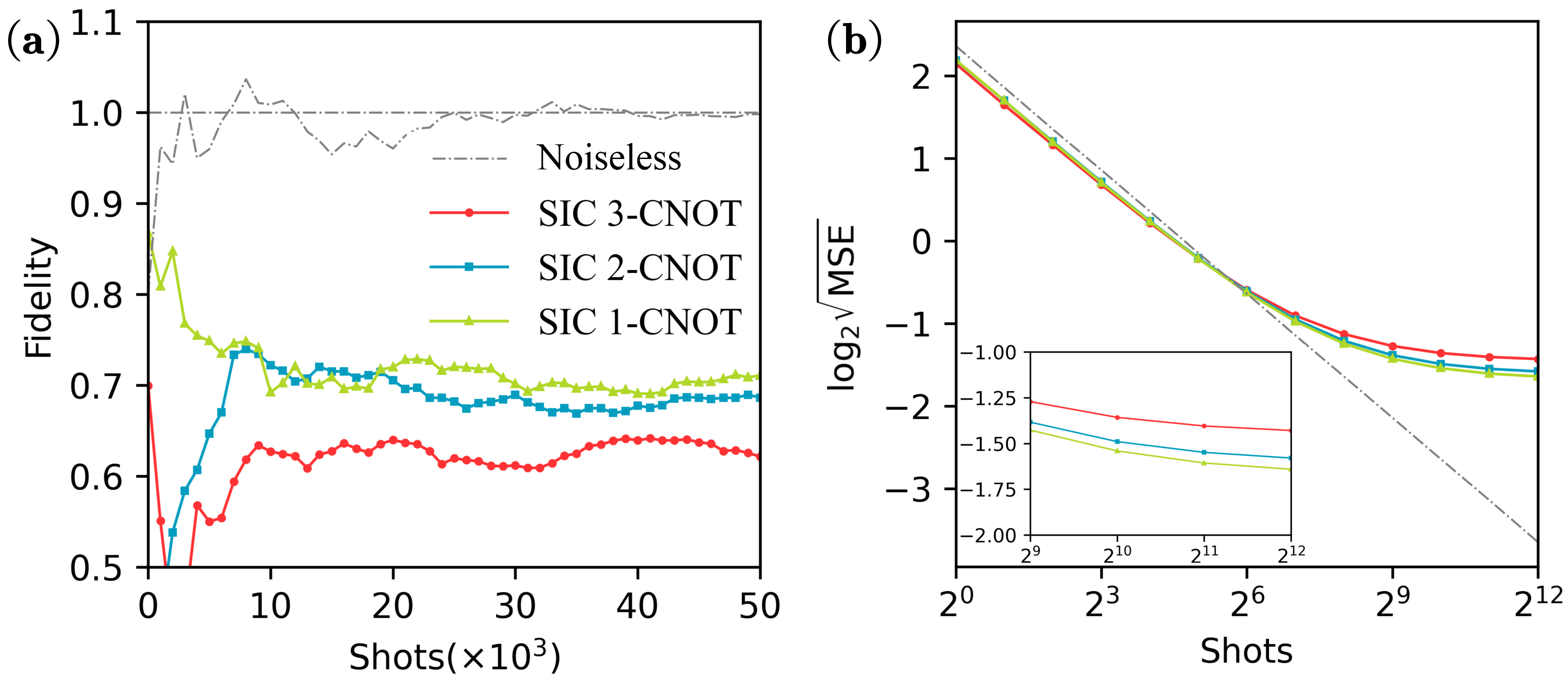}
\caption{\justifying{
The estimation fidelity and mean-square error (MSE) of a 6-qubit GHZ state conducted with three SIC-POVMs compilation methods in the presence of T1/T2 thermal relaxation noise. (a) The estimation fidelity of different compilation methods, which leads to varying counts of CNOT gates, i.e., 18, 12, and 6, respectively. The gray dash-dotted line represents the benchmark for a noiseless scenario for comparison purposes. (b) The MSE relative to the ideal fidelity with different compilation methods. The gray dash-dotted line indicates the variance convergence rate in an ideal setting. As the variance decreases rapidly, the MSEs will converge towards the bias, demonstrating that reducing the CNOT count results in a final outcome with reduced bias.}
}
\label{fig:noise_sic}
\end{figure*}

\section{Applications in Shadow Tomography}\label{sec:appli}
In this section, we showcase the advantages of our optimized compilation method: the noise-resilient performance and compatibility with the POVMs optimization techniques. These aspects are demonstrated using the shadow estimation method with SIC-POVMs \cite{garcia2021learning,acharya2021shadow,nguyen2022optimizing,stricker2022experimental}. We begin with a concise introduction to shadow estimation within the POVM framework.

Shadow estimation involves two key aspects: generating shadows with sufficient information and utilizing these shadows to reconstruct quantum parameters. For the first aspect, an IC-POVM, as described in Eq.~\eqref{eq:icpovm_conditions}, is required to make the measurement channel $\mathcal{M}$ invertible. For an $n$-element IC-POVM, the $\mathcal{M}$ channel can be expressed as
\begin{align}\label{eq:Mchannel}
\begin{aligned}
\mathcal{M} |\rho\rangle \!\rangle =\sum_{i=1}^{n}{|\Pi _i\rangle \!\rangle \langle \!\langle \Pi _i|\rho\rangle \!\rangle}.
\end{aligned}
\end{align}

Similar to the randomized measurements scheme, we treat $|\Pi_{\boldsymbol{b}}\rangle \!\rangle$ as a snapshot and $|\hat{\rho}_{\boldsymbol{b}}\rangle \!\rangle=\mathcal{M}^{-1}{|\Pi_{\boldsymbol{b}}\rangle \!\rangle}$ as the corresponding classical shadow. Here, $\boldsymbol{b}$ denotes the bitstring obtained from measurements performed across the entire quantum system, which is distinct from the measurement result $b$ of the single-qubit IC-POVM.
This bitstring $\boldsymbol{b}$ directly identifies the measurement operator $\Pi _i$ as $\Pi_{\boldsymbol{b}}$. To obtain the classical shadow, it is necessary to calculate the inverse map $\mathcal{M}^{-1}$. This can be achieved either by directly computing the inverse matrix or by employing projective 2-designs for simplification. For the single-qubit case, the inverse map of SIC-POVMs can be represented as
\begin{align}\label{eq:Mchannel_sic}
\begin{aligned}
\mathcal{M} ^{-1}=3I-2|\Phi ^+\rangle \langle \Phi ^+|,
\end{aligned}
\end{align}
with $|\Phi ^+\rangle=\frac{1}{\sqrt{2}}\left( |00\rangle +|11\rangle \right)$. One can verify that it is equivalent to the form of $\mathcal{M} ^{-1}\left( \Pi _b \right) =3\Pi _b-\mathrm{tr}\left( \Pi _b \right) I$. For single-qubit SIC-POVMs, there are only four possible measurement outcomes. This implies that we only need to store the measurement result bitstrings and the four potential values of $\mathcal{M}^{-1}{|\Pi_b\rangle \!\rangle}$. Consequently, classical shadows can be obtained by referring to a pre-computed table, as shown in the blue part of Fig.~\ref{fig:overall}. 

For an $N$-qubit system, finding and implementing a global SIC-POVM presents significant challenges. Alternatively, we can perform single-qubit SIC-POVMs on each qubit individually. The overall inverse map is then the tensor product of the inverse maps for each qubit, given by
\begin{align}\label{eq:Mchannel_sic}
\begin{aligned}
|\hat{\rho}_{\boldsymbol{b}}\rangle \!\rangle =\bigotimes_{i=1}^N{\mathcal{M}_{i}^{-1}}|\Pi _{b_{i}}\rangle \!\rangle ,
\end{aligned}
\end{align}
where
% $i$ denotes the $i$th qubit, 
$b_{i}$ denotes the measurement bitstring of the single-qubit SIC-POVM of the $i$th qubit.

% \zy{change it to a sub section form?}
\subsection{Noise-Resilient Performance in Optimized Circuits}
% \textbf{Noise-resilient Performance:} 
One key benefit of optimized compilation is its ability to minimize the impact of noise from CNOT gates by reducing their count. To illustrate this advantage, we examine the task of estimating the fidelity of a 6-qubit GHZ state, as depicted in Fig.~\ref{fig:noise_sic}. The $N$-qubit GHZ state can be represented as
\begin{align}
\begin{aligned}
|\mathrm{GHZ}\left( N \right) \rangle =\frac{1}{\sqrt{2}}(|0\rangle ^{\otimes N}+|1\rangle ^{\otimes N}).
\end{aligned}
\end{align}

Given a state $\rho$ and a target pure state $\Psi$, the fidelity can be calculated using the formula $F=\langle \!\langle \Psi |\rho \rangle \!\rangle $. In the context of shadow estimation, the fidelity between the reconstructed state from $M$ shots and the pure state $\Psi$ can be further calculated as
\begin{align}\label{eq:fidelity}
\begin{aligned}
\hat{F}=\frac{1}{M}\sum_{m=1}^{M}{\langle \!\langle  \Psi |\hat{\rho}_{\boldsymbol{b}^{(m)}} \rangle \!\rangle},
\end{aligned}
\end{align}
where $\boldsymbol{b}^{(m)}$ denotes the result bitstring $\boldsymbol{b}$ of the $m$th shot. 

\begin{figure*}[ht]
\centering
\includegraphics[width=0.8\textwidth]{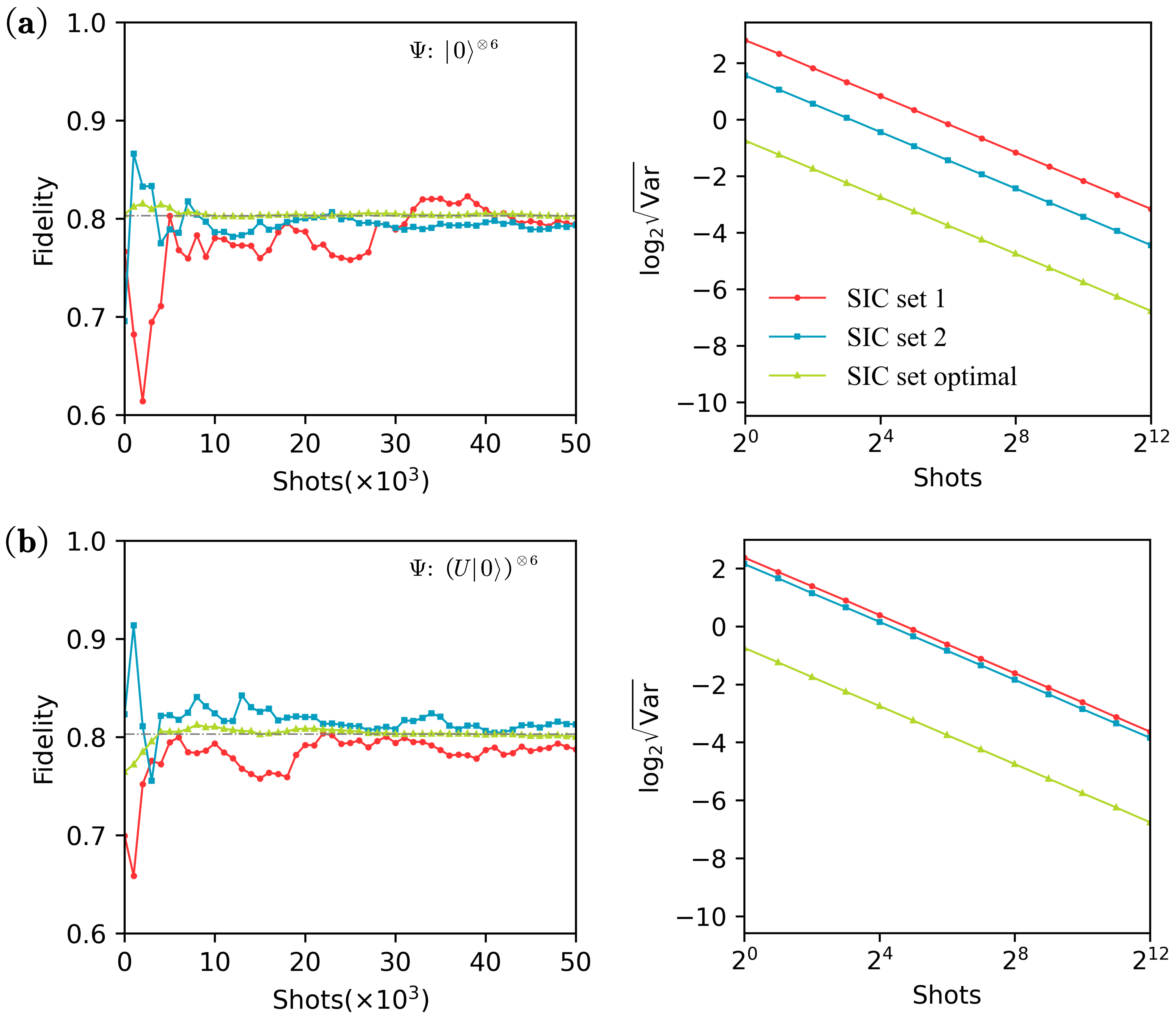}
\caption{\justifying{
The estimation fidelity and variance of 6-qubit mixed states, $\rho =p\frac{I}{2^N}+\left( 1-p \right) \Psi$ with $p=0.2$, using different SIC-POVM sets. The expected fidelity equals to $0.80315$. We set $\Psi = \ket{0}^{\otimes 6}$ in (a) and $\Psi = U\ket{0}^{\otimes 6}$ where $U = e^{\mathrm{i}\sum_{j=0}^3{\gamma_j \sigma_j}}$ with $\vec{\gamma} = [0 \ \frac{\pi}{3} \ \frac{\pi}{3} \ \frac{\pi}{3}]$ in (b). In (a) and (b), the optimized SIC-POVM demonstrates superior variance performance compared with other SIC-POVM sets.
}}
\label{fig:optimal_sic}
\end{figure*}

We utilize the same set of SIC-POVM of Fig.~\ref{fig:povm_cors}. To closely mimic a real quantum computing environment, we adopt a noise model known as T1/T2 thermal relaxation \cite{rost2020simulation}, with the parameters for this noise model following the documentation provided by Qiskit. The thermal relaxation time constant $T_1$ and the dephasing time constant $T_2$ are sampled from normal distributions with a mean of 50 ms, ensuring that $T_2 \leqslant T_1$ to reflect realistic quantum computing conditions. The instruction times for various operations within a quantum computing environment are specified as follows: single-qubit gates are set to 100 ns, CNOT gates to 300 ns, and both reset and measurement operations are allocated 1000 ns each. Noise is applied to every gate, including those used in both the state preparation process and the measurement circuit.

In Fig.~\ref{fig:noise_sic}(a), the gray dash-dot line represents a comparative experiment in an ideal environment, which demonstrates good convergence to the expected fidelity. The other three lines depict scenarios within a noisy environment. Due to the noise present during the state preparation stage, none of these scenarios will achieve the ideal fidelity. However, by reducing the CNOT count, the final bias can exhibit better performance. Given that it acts as a biased estimator of the ideal fidelity, we employ the mean-square error (MSE) to characterize performance, as depicted in Fig.~\ref{fig:noise_sic}(b). Here, it is observed that the variance diminishes as the number of shots increases, leading to a convergence of the MSEs towards the bias. This observation provides a more confident conclusion regarding the performance improvement achieved by reducing the CNOT count.

\subsection{Flexibility in Compiling Various SIC-POVMs}
% \textbf{Optimized SIC-POVM:} 
Another advantage of the compilation scheme is its ability to compile any SIC-POVM using 1 CNOT gate, and it offers 2-CNOT solutions for any single-qubit minimal IC-POVM. This flexibility makes it highly adaptable to various algorithms that focus on optimizing the POVM itself, such as those discussed in Ref.~\cite{garcia2021learning,nguyen2022optimizing}. To illustrate this point, we introduce a toy task aimed at estimating the fidelity of a mixed state using different SIC-POVM settings. The mixed states $\rho$ can be derived from a pure state $\Psi$ through the application of the depolarizing channel, that is
\begin{align}\label{eq:dep}
\begin{aligned}
\rho =p\frac{I}{2^N}+\left( 1-p \right) \Psi .
\end{aligned}
\end{align}

To simplify the optimization process of SIC-POVM, we use the tensor product of identical single-qubit state for the pure state, denoted as $\Psi =\Psi _{1}^{\otimes N}$. In this task, we focus on the compilation of different sets. Therefore, the measurement circuits are considered noise-free, allowing us to calculate the expected fidelity directly from Eq.~\eqref{eq:dep} as $F=\frac{1}{2^N}p+(1-p)$. Here, we set $p=0.2$, and the expected fidelity calculates to $0.803125$.

From Ref.~\cite{nguyen2022optimizing}, we know that for a rank-1 observable, the optimal SIC-POVM for estimation includes one element that is orthogonal to the observable.
In the task of pure state fidelity estimation, where the observable is the pure state itself, the optimized SIC-POVM can be selected by ensuring it includes a state $|\varphi_i\rangle$ that is orthogonal to the pure state. This orthogonality criterion makes the SIC-POVM the optimal choice for this task. For comparison, we also use the SIC-POVMs from Set 1 and Set 2 introduced in Fig.~\ref{fig:2sets}.

In Fig.~\ref{fig:optimal_sic}(a), we employ $\Psi=|0\rangle^{\otimes 6}$ as the initial pure state. All sets yield unbiased estimators, though their convergence speeds vary. The optimized SIC-POVM consistently shows superior variance performance. In Fig.~\ref{fig:optimal_sic}(b), we employ $\Psi=(U|0\rangle)^{\otimes 6}$ as the initial pure state, where $U=e^{\mathrm{i}\sum_{j=0}^3{\gamma _j\sigma _j}}$ and $\vec{\gamma}$ is set as $[0 \ \frac{\pi}{3} \ \frac{\pi}{3} \ \frac{\pi}{3}]$. The performance of Set 1 and Set 2 varies with changes in the initial state, while the optimized SIC-POVM consistently achieves optimal variance performance. Across both examples, the optimized SIC-POVM demonstrates faster convergence and highlights the compatibility of our compilation method.

\section{Conclusion and Outlook}
% In this study, we optimized single-qubit minimal IC-POVM implementation circuits, particularly focusing on SIC-POVMs within a dimension dilation framework. Our key contribution is demonstrating that any single-qubit minimal IC-POVM can be implemented with no more than 2 CNOT gates and that for SIC-POVMs, only 1 CNOT gate is required. We introduce the synthesis circuit to construct a category of unitaries aimed at implementing the same POVM, along with a more elaborated version for SIC-POVMs. For practical application, we propose a circuit with a fixed structure and provide a procedure to determine the necessary parameters for implementing a SIC-POVM. Finally, we use the shadow estimation method to showcase the noise-resilient performance and its compatibility with the POVM optimization technique. \zy{modify}

In this work, we tackled the challenge of minimizing CNOT gate counts in the implementation circuits for single-qubit minimal IC-POVMs and SIC-POVMs. By adjusting parameters that do not affect POVM outcomes, we demonstrated that any single-qubit minimal IC-POVM can be implemented with at most 2 CNOT gates, while SIC-POVMs require only 1. Inspired by Bell measurements, we developed a concise and general SIC-POVM compilation circuit along with an efficient algorithm for parameter determination. Moreover, we applied this optimized circuit to shadow estimation tasks, demonstrating its significant noise resilience and versatility in compiling various SIC-POVMs.
Our work enhances the practicality of POVM-based shadow estimation for real-world quantum platforms. Based on our compilation framework, there are several potential and intriguing directions to be explored in the future. 

% In Proposition \ref{prop:povm_2cnot}, we have successfully demonstrated that at least one $U_{\mathrm{P}}$ can be compiled by 2 CNOT gates. Furthermore, as illustrated in Fig.~\ref{fig:povm_cors}, there exists an infinite number of possible $U_{\mathrm{P}}$ configurations.

In Sec.~\ref{sec:reduceic}, we use a dichotomous approach to identify the 2-CNOT $U_{\mathrm{P}}$ as shown in Fig.~\ref{fig:povm_cors}. While practical and effective, this method allows for various possible $U_{\mathrm{P}}$ choices, leaving room for further optimization based on specific purposes. Future studies could focus on optimizing 2-CNOT $U_{\mathrm{P}}$ to achieve either a fixed-structure circuit or one that minimizes the use of single-qubit gates.
% Inspired by the Bell measurement-based circuit in Sec.~\ref{sec:prac}, future work could focus on finding a 2-CNOT $U_{\mathrm{P}}$ that can be compiled into a fixed-structure circuit, or one that minimizes the use of single-qubit gates.
Moreover, note that we only adjust $C_Q$ to find the 2-CNOT $U_{\mathrm{P}}$ in Sec.~\ref{sec:reduceic}. By adjusting both $C_Q$ and $U_{\mathrm{ph}}$,
% , similar to the process for finding the 1-CNOT $U_{\mathrm{SIC}}$ in Sec.~\ref{sec:sic_reduce}, 
we expect that minimal IC-POVMs could be further classified into two types: one can be implemented with a 1-CNOT circuit, similar to SIC-POVMs; and the other that can not. Such a classification problem is also a promising point for future research.

% in Sec.~\ref{sec:sic_reduce}, 

% there exists a substantial category of minimal IC-POVM, encompassing SIC-POVM among others, that may not necessitate the use of 2 CNOT gates, thus presenting avenues for further optimization. 

In Sec.~\ref{sec:appli}, we employed the tensor product of single-qubit IC-POVMs for measurements. This approach is practical for quantum devices and sufficient for reconstructing information in many-qubit systems. While it is effective for local properties, further exploring global IC-POVMs could enhance performance for global properties. Future research could focus on optimizing the CNOT gate count of $2N$-qubit $U_{\mathrm{P}}$ in global IC-POVMs, potentially utilizing higher-dimensional Cartan decomposition techniques \cite{khaneja2001cartan,mansky2023near}. Additionally, the relationship between SIC-POVMs and Bell measurements, as discussed in App.~\ref{ap:sic_bell}, could extend to higher dimensions. This suggests a potential direction for identifying higher-dimensional Bell bases and corresponding compilation methods.

Additionally, our work reveals that parameters such as phase factors, often overlooked in many tasks, can significantly influence the unitary representation of the circuit and thus the CNOT count. This insight can be leveraged to improve various tasks that involve compiling 2-qubit unitaries with a focus solely on measurement results, such as collective measurements based on qubit circuits \cite{conlon2023approaching,conlon2023discriminating}. The reduction in CNOT and single-qubit gate counts also facilitates the incorporation of error mitigation techniques \cite{endo2018practical,glos2022adaptive}. Future research could also focus on developing a unified framework that combines our compilation scheme with error mitigation approaches, aiming for more robust and efficient quantum computing solutions.

% \zy{mention our results can be used in relevant work?}

\section{Acknowledgement}
We acknowledge useful discussions with Dayue Qin and Qingyue Zhang.
This work is supported by the National Natural Science Foundation of China (NSFC) (Grant No.~12205048), Innovation Program for Quantum Science and Technology (Grant No.~2021ZD0302000), and the start-up funding of Fudan University. 

%\bibliography{Ref}
%apsrev4-2.bst 2019-01-14 (MD) hand-edited version of apsrev4-1.bst
%Control: key (0)
%Control: author (8) initials jnrlst
%Control: editor formatted (1) identically to author
%Control: production of article title (0) allowed
%Control: page (0) single
%Control: year (1) truncated
%Control: production of eprint (0) enabled
%

\clearpage
\onecolumngrid
\newpage

\begin{appendix}
\section*{Appendix Contents}
In this Appendix, we provide detailed explanations, additional discussions, and algorithms to supplement the main text.
In App.~\ref{ap:POVM_Unitary}, we use tensor network representations to connect the POVM operators with the unitary $U_{\mathrm{P}}$ in the dilation framework of Eq.~\eqref{eq:povm_process}. App.~\ref{ap:determinant} offers a derivation of the determinant equations in Eq.~\eqref{eq:chi_simpl}. Moving on to App.~\ref{ap:parameterize}, we give a detailed explanation for the classification and representation of parameters $\Theta$ and $\Gamma$ for both $U_{\mathrm{P}}$ and $U_{\mathrm{SIC}}$, including a description of the relabeling operation $U_{\mathrm{rel}}$, as mentioned in Sec.~\ref{sec:reduce}. While in App.~\ref{ap:sic_bell}, we again use the tensor network representation to discuss the relationship between the Bell measurement and SIC-POVM. Finally, App.~\ref{ap:algo1} elaborates on Algo.~\ref{algo:algo1} in Sec.~\ref{sec:prac} and further introduces Algo.~\ref{algo:algo2}, which determines the remaining parameters necessary for implementing the general circuit shown in Fig.~\ref{fig:prac_circ}(a).

\section{Unitary for Rank-1 POVMs}\label{ap:POVM_Unitary}
The relationship between rank-1 POVMs and quantum circuits is established in the main text through the model in Eq.~\eqref{eq:povm_process}. We first employ a tensor network representation \cite{wood2011tensor} to depict the measurement process as
\begin{equation}\label{eq:ancilla_measure_tensor}        
\includegraphics{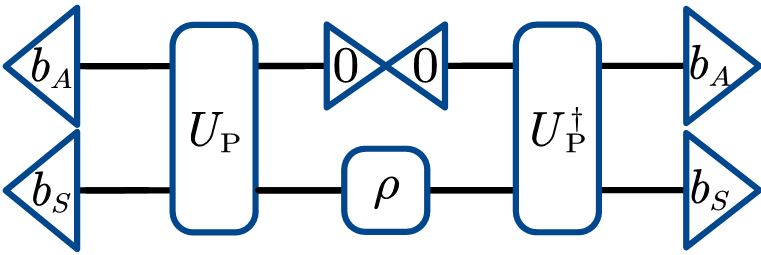},
\end{equation}
where $b_{A}$ and $b_{S}$  represent the possible outcomes of measurement on the auxiliary qubit and $\rho$ respectively on the computational basis, namely the set $\{0,1\}$. It should be noted that the tensor network is read in the same order as the formula but in the reverse order of the circuit form. To delve further into the specific structure of $U_{\mathrm{P}}$, we expand $U_{\mathrm{P}}$ into its individual elements, that is
\begin{equation}        
\includegraphics{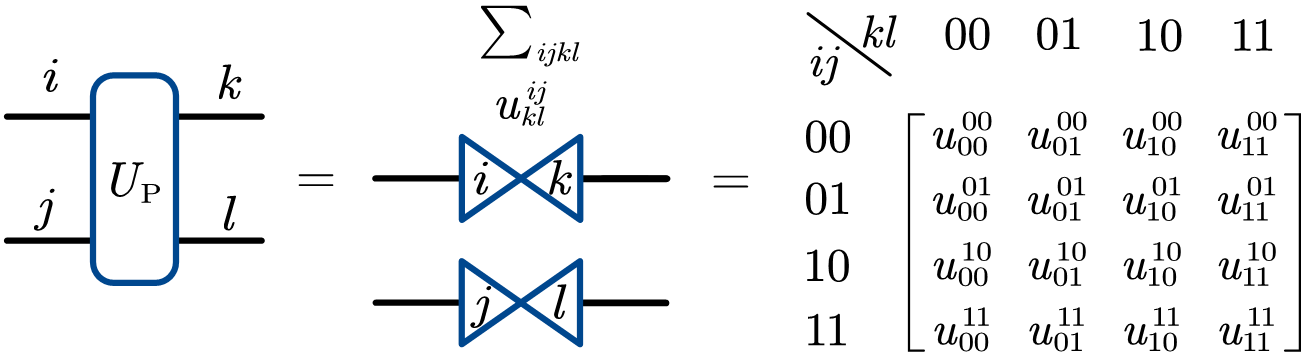}.
\end{equation}
Here, $i,j$ denote the row indices of $U_{\mathrm{P}}$, while $k,l$ pertain to its column indices. Each $u_{kl}^{ij}$ depicts the specific element of $U_{\mathrm{P}}$ at the corresponding row and column indices. Integrating this representation into Eq.~\eqref{eq:ancilla_measure_tensor}, we have
\begin{equation}        
\includegraphics{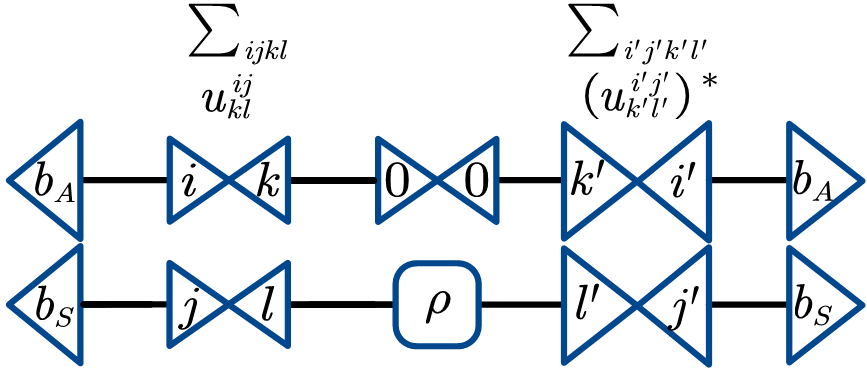}.
\end{equation}

Since $|b_A\rangle$, $|b_S\rangle$, along with $|i\rangle$, $|j\rangle$, $|k\rangle$, $|l\rangle$ are all elements of the computational basis, the inner product equals 1 only when indices match, otherwise, it is 0. Omitting all the items that equal to $0$, we have
\begin{equation}        
\includegraphics{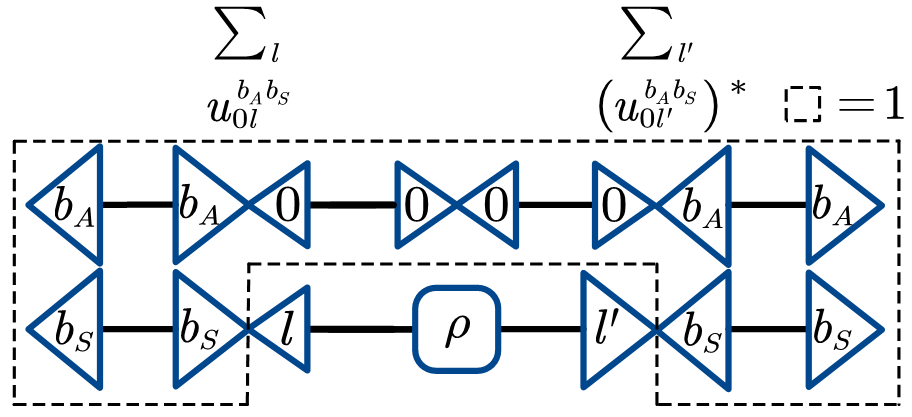}.
\end{equation}
Then abbreviate the expression $\sum\nolimits_l^{}{u_{0l}^{b_Ab_S}}\langle l|$ to $\langle \varphi _{b_Ab_S}|$, thus allowing further simplification of the tensor network, that is
\begin{equation}        
\includegraphics{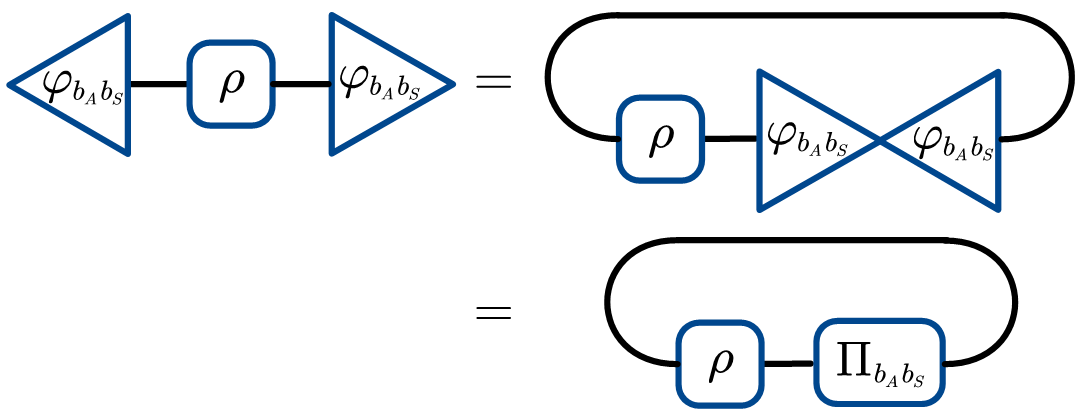}.
\end{equation}

It becomes evident that a POVM is determined solely by the elements where $k=0$ in $U_{\mathrm{P}}$, that is to say, the first two columns of $U_{\mathrm{P}}$. The specific elements thereof are labeled via the measurement outcomes $b_{A}b_{S}$, which conveys that if the measurement result is $b_{A}b_{S}$, it is tantamount to having performed a measurement with the POVM element constituted by the elements in the row $b_{A}b_{S}$ of $U_{\mathrm{P}}$.

\section{1-CNOT Determinant}\label{ap:determinant}
For $U \in U(4)$, the $\chi \left[ \gamma \left( U \right) \right]$ can be expanded as follows
\begin{align}\label{eq:chi}
\begin{aligned}
\chi \left[ \gamma \right] &=\det \left( xI-\gamma \right) 
\\
&=\left( x-\lambda _1 \right) \left( x-\lambda _2 \right) \left( x-\lambda _3 \right) \left( x-\lambda _4 \right) 
\\
&=x^4-\sum_i{\lambda _ix^3}+\sum_{i<j}{\lambda _i\lambda _jx^2}-\sum_{i<j<k}{\lambda _i\lambda _j\lambda _kx}+\prod_i{\lambda _i},
\end{aligned}
\end{align}
where $\gamma$ denotes $\gamma\left( U \right)=U\sigma_{2}^{\otimes 2}U^{T}\sigma_{2}^{\otimes 2}$, and $\lambda_{i}$ are the eigenvalues of $\gamma$. Given that $\gamma$ is a unitary, each $\lambda _i$ should be of the form $\lambda _i=e^{\mathrm{i}\alpha _i}$. Since $\prod_i{\lambda _i}=\det \left( \gamma \right) $, it follows that $\prod_{i\ne j}{\lambda _i}=\det \left( \gamma \right) \lambda _{j}^{-1}=\det \left( \gamma \right) \lambda _{j}^{*}$, leading to $\sum_{i<j<k}{\lambda _i\lambda _j\lambda _k}=\det \left( \gamma \right) \sum_i{\lambda _{i}^{*}}$. Since $\sum_i{\lambda _i}=\mathrm{tr}\left( \gamma \right) $, we have $\sum_i{\lambda _{i}^{2}}=\mathrm{tr}\left( \gamma^2 \right) $, $\sum_i{\lambda _{i}^{*}=\mathrm{tr}^*\left( \gamma \right)}$ and $\sum_{i<j}{\lambda _i\lambda _j}=\frac{1}{2}\left( \sum_i{\lambda _i}\sum_j{\lambda _j}-\sum_i{\lambda _{i}^{2}} \right) =\frac{1}{2}\left( \mathrm{tr}^2\left( \gamma \right) -\mathrm{tr}\left( \gamma ^2 \right) \right) $. And for $U \in SU(4)$, where $\det(U) = 1$, this simplifies to
\begin{align}\label{eq:chi_simp}
\begin{aligned}
\chi \left[ \gamma \right] =x^4-\mathrm{tr}\left( \gamma \right) x^3+\frac{1}{2}\left( \mathrm{tr}^2\left( \gamma \right) -\mathrm{tr}\left( \gamma ^2 \right) \right) x^2-\mathrm{tr}^*\left( \gamma \right) +1.
\end{aligned}
\end{align}

Comparing with the conditions for the number of CNOT gates as mentioned in Proposition \uppercase\expandafter{\romannumeral3}.2 in Ref.~\cite{shende2004recognizing}, we derive the discriminant for the 1 CNOT case where $U \in SU(4)$, that is

\begin{align}\label{eq:chi_simp}
\begin{aligned}
\begin{cases}
	\mathrm{tr}\left( \gamma \right) =0,\\
	\mathrm{tr}\left( \gamma ^2 \right) =-4.\\
\end{cases}
\end{aligned}
\end{align}
For $U \in U(4)$, the second condition modifies to $\mathrm{tr}\left( \gamma ^2 \right) =-4\det(U)$.

\section{Parameterization for $U_{\mathrm{P}}$}\label{ap:parameterize}
In Ref.~\cite{garcia2021learning}, the authors indicate that the measurement of POVM is defined by eight specific parameters. The corresponding elements in $U_{\mathrm{P}}$ are determined through Gram-Schmidt orthogonalization and high-dimensional spherical mapping. However, the orthonormal basis derived from Gram-Schmidt orthogonalization is not uniquely determined. Consequently, variations in the orthonormal basis can lead to different POVMs for the same set of parameters. Moreover, to minimize the CNOT count, it is essential to analyze parameters that do not affect the measurement results of the POVM. This necessitates the introduction of a refined parameterization approach for $U_{\mathrm{P}}$ in this section.

\subsection{Parameters in $\Theta$} 

Firstly, we identify parameters that do not influence the POVM. Due to the property $\mathrm{tr}\left( \rho \Pi _i \right) =\mathrm{tr}\left( \rho |\varphi _i\rangle \langle \varphi _i| \right) =\mathrm{tr}\left( \rho |\varphi _i\rangle e^{\mathrm{i}\alpha}e^{-\mathrm{i}\alpha}\langle \varphi _i| \right) $, it is evident that the phase of each row in $U_{\mathrm{P}}$ does not influence the measurement results. Consequently, we can adjust 4 parameters to ensure that the first column of $U_{\mathrm{P}}$ in Fig.~\ref{fig:Up_detail} is real. We can express this adjustment as 
\begin{equation}\label{eq:ab_phase}
U_{\mathrm{P}}(\Gamma,\Theta)=U_{\mathrm{ph}}U_{\mathrm{P}}(\Gamma,\Theta_{0}),
\end{equation}
where $U_{\mathrm{P}}(\Gamma,\Theta_0)$ serves as a starting point for $\Theta$ variation, and $\Theta_0$ is a set of fixed parameters in $U_{\mathrm{P}}(\Gamma,\Theta_0)$. Here, $U_{\mathrm{ph}}=\mathrm{diag}\left( \left[ \begin{matrix}
	e^{\mathrm{i}\alpha _0}&		e^{\mathrm{i}\alpha _1}&		e^{\mathrm{i}\alpha _2}&		e^{\mathrm{i}\alpha _3}\\
\end{matrix} \right] \right) $ is a diagonal matrix. We can represent these parameters in the form of a circuit using $R_{z}$ and $C_{R_{z}}$ gate:
\begin{equation}
R_{z}\left( \beta \right) =\left[ \begin{matrix}
	1&		0\\
	0&		e^{\mathrm{i}\beta}\\
\end{matrix} \right] ,\ \ C_{R_{z}}\left( \beta \right) =\left[ \begin{matrix}
	1&		0&		0&		0\\
	0&		1&		0&		0\\
	0&		0&		1&		0\\
	0&		0&		0&		e^{\mathrm{i}\beta}\\
\end{matrix} \right] .
\end{equation}
Subsequently, one has $U_{\mathrm{ph}}=e^{\mathrm{i}\beta _0}C_{R_{z}}(\beta _3)(R_{z}(\beta _1)\otimes R_{z}(\beta _2))$ with $\beta _0 = \alpha_0, \beta_1 = \alpha_2 -\alpha_0, \beta_2=\alpha_1-\alpha_0, \beta_3= \alpha_0+\alpha_3-\alpha_1-\alpha_2$.

Based on the relationship between POVM and $U_{\mathrm{P}}$ detailed in App.~\ref{ap:POVM_Unitary}, which indicates that the last two columns are unrelated to the measurement process, we can succinctly represent $U_{\mathrm{P}}(\Gamma,\Theta_0)$ as 
\begin{align}
\begin{aligned}
    U_{\mathrm{P}}(\Gamma,\Theta_0)=\left[ \begin{matrix}
|\psi _1\rangle&	|\psi _2\rangle&	|\psi _3\rangle& |\psi _4\rangle\\
\end{matrix} \right] .
\end{aligned}
\end{align}
Given that $|\psi _1\rangle$ and $|\psi _2\rangle$ form an orthonormal basis spanning a subspace $\mathcal{H}_m$, while $|\psi _3\rangle$ and $|\psi _4\rangle$ span its orthogonal complement $\mathcal{H}_m^{\bot}$, any linear transformation $Q$ applied to $\mathcal{H}_m^{\bot}$ must preserve the orthogonality between the two subspaces. Consequently, $Q$ does not affect the first two columns of $U_{\mathrm{P}}$, which correspond to the POVM elements. To ensure that $U_{\mathrm{P}}$ remains unitary after $Q$ is applied, $Q$ itself must be a unitary transformation parameterized by 4 parameters. We can apply $Q$ to $\mathcal{H}_m^{\bot}$ as 
\begin{align}\label{eq:ab_q}
\begin{aligned}
    U_{\mathrm{P}}(\Gamma,\Theta)&=\left[ \begin{matrix}
|\psi _1\rangle&	|\psi _2\rangle&	|\psi _3\rangle& |\psi _4\rangle\\
\end{matrix} \right]  \left[ \begin{matrix}
	I&		0\\
	0&		Q\\
\end{matrix} \right] \\
&=U_{\mathrm{P}}(\Gamma,\Theta_0) C_{Q}.
\end{aligned}
\end{align}
The overall operation effectively constitutes a controlled unitary gate, denoted as $C_{Q}$. Combining Eq.~\eqref{eq:ab_phase} and Eq.~\eqref{eq:ab_q}, we obtain 
\begin{align}
\begin{aligned}
    U_{\mathrm{P}}(\Gamma,\Theta)=U_{\mathrm{ph}}U_{\mathrm{P}}(\Gamma,\Theta_0) C_{Q},
\end{aligned}
\end{align}
which is able to adjust all eight free parameters that do not influence the measurement outcome of the POVM.

\subsection{Parameters in $\Gamma$}\label{ap:para_psi}
For the remaining 8 parameters that are directly related to the POVM, recalling Eqs.~\eqref{eq:povm_conditions} and \eqref{eq:V} in the main text, we have 
\begin{align}\label{eq:rank1_completeness}
\begin{aligned}
V^{\dagger}V=\sum_i{|\varphi _i\rangle \langle \varphi _i|}=I .
\end{aligned}
\end{align}
Here, $|\varphi _i\rangle$ can be expressed as $|\varphi _i\rangle =a_i\left( \cos \frac{\theta _i}{2}|0\rangle +e^{-\mathrm{i}\phi _i}\sin \frac{\theta _i}{2}|1\rangle \right) $, where $a_i \in \left[0, 1\right]$, $\theta_i \in \left[0, \pi\right]$, and $\phi_i \in \left[0, 2\pi\right)$. Substituting this form into Eq.~\eqref{eq:rank1_completeness}, we can derive and simplify the constraints for $a_i$, $\theta_i$ and $\phi_i$, that is
\begin{align}\label{eq:basic_constrain}
\begin{aligned}
\begin{cases}
	\sum_i{a_{i}^{2}\cos ^2\frac{\theta _i}{2}}=1,\\
	\sum_i{a_{i}^{2}\sin ^2\frac{\theta _i}{2}}=1,\\
	\sum_i{a_{i}^{2}e^{\mathrm{i}\phi _1}\cos \frac{\theta _i}{2}\sin \frac{\theta _i}{2}}=0,\\
	\sum_i{a_{i}^{2}e^{-\mathrm{i}\phi _1}\cos \frac{\theta _i}{2}}\sin \frac{\theta _i}{2}=0.\\
\end{cases}\Rightarrow \ \ \ \begin{cases}
	\sum_i{a_{i}^{2}}=2,\\
	\sum_i{a_{i}^{2}\cos \theta _i}=0,\\
	\sum_i{a_{i}^{2}\sin \theta _i\cos \phi _i}=0,\\
	\sum_i{a_{i}^{2}\sin \theta _i\sin \phi _i}=0.\\
\end{cases}
\end{aligned}
\end{align}
We define a new set of vectors $|\eta _i\rangle =\left[ \begin{matrix}
	a_{i}^{2}\cos \theta _i&		a_{i}^{2}\sin \theta _i\cos \phi _i&		a_{i}^{2}\sin \theta _i\sin \phi _i\\
\end{matrix} \right] ^T$, and the condition in Eq.~\eqref{eq:basic_constrain} can then be reformulated as $\sum_i{|\eta _i\rangle}=0$ and $\sum_i{a_{i}^{2}}=2$. This formulation simplifies the process of constructing a POVM.

\begin{figure}[h!]
\centering
\includegraphics[width=0.55\textwidth]{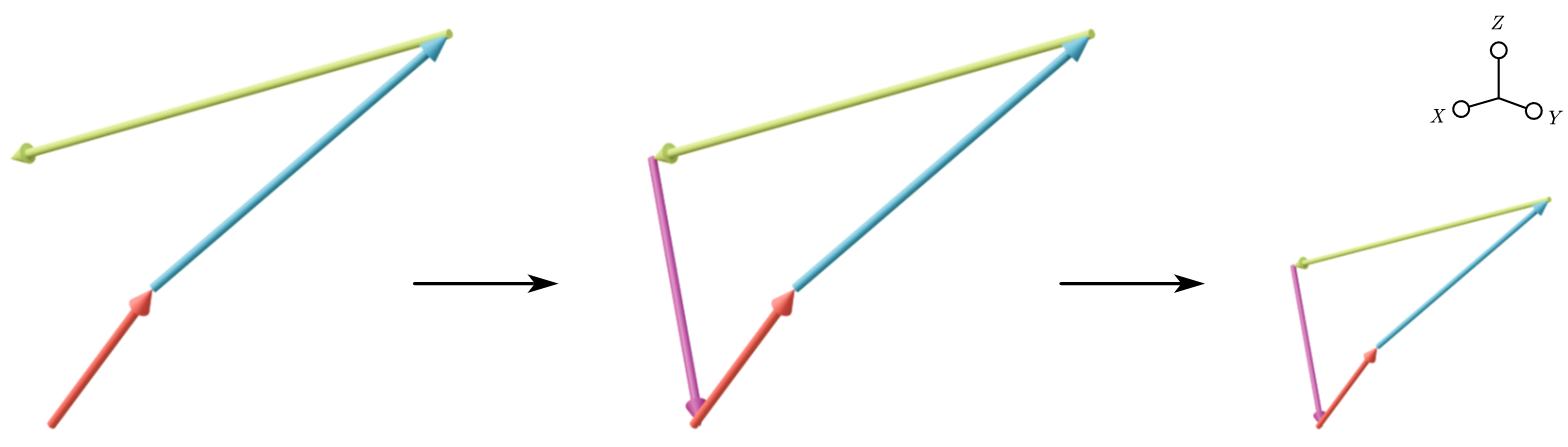}
\caption{\justifying{The construction process of a single-qubit 4-element rank-1 POVM. The vectors depicted are $|\eta_i\rangle$ rather than $|\varphi_i\rangle$. Once the parameters of the first three rows are selected, the parameters of the fourth row are determined accordingly.
The values of $a_i^{2}$ are then scaled by $2\left(\sum_{i=1}^{n-1}{a_{i}^{2}}+\left| |\eta_n\rangle \right\|\right)^{-1}$ to satisfy the condition $\sum_i{a_{i}^{2}}=2$.}}
\label{fig:povm_construction}
\end{figure}

The process of constructing a single-qubit rank-1 POVM starts by selecting parameters for the first $n-1$ vectors, where $n$ is the total number of elements in the POVM. Due to the condition $\sum_i{|\eta_i\rangle}=0$, the final vector $|\eta_n\rangle$ must be $|\eta_n\rangle =-\sum_{i=1}^{n-1}{|\eta_i\rangle}$, leaving it with no free parameters. To ensure that $\sum_i{a_i^2}=2$, we scale this closed polygon by a factor of $2({\sum_{i=1}^{n-1}{a_{i}^{2}}+\left\| |\eta _n\rangle \right\|})^{-1}$, as shown in Fig.~\ref{fig:povm_construction}. For the case where $n=4$, the initial three vectors $|\eta _i\rangle$ provide 9 parameters. The scaling constraint reduces one parameter from $a_i$, resulting in 8 parameters. This result matches our expectations for $\Gamma$, the set of parameters that determines the POVM.

For $U_{\mathrm{SIC}}$ associated with a SIC-POVM, the parameters in $\Gamma$ are further constrained by Eq.~\eqref{eq:sicpovm_conditions}. This constraint can be divided into two parts. First, $\langle \!\langle \Pi _i|\Pi _i\rangle \!\rangle=\frac{1}{d^2}$, leads to $a_i=\frac{1}{\sqrt{2}}$, eliminating all parameters related to $\alpha_i$. Second, $\langle \!\langle \Pi _i|\Pi _j\rangle \!\rangle=\frac{1}{d^2\left( d+1 \right)}$ for $i \ne j$, dictates that each pair of $\ket{\eta_i}$ and $\ket{\eta_j}$ maintain a fixed angle $\varDelta _{ij}=\arccos \left( -\frac{1}{3} \right)$. 

\begin{figure}[H]
\centering
\includegraphics[width=0.55\textwidth]{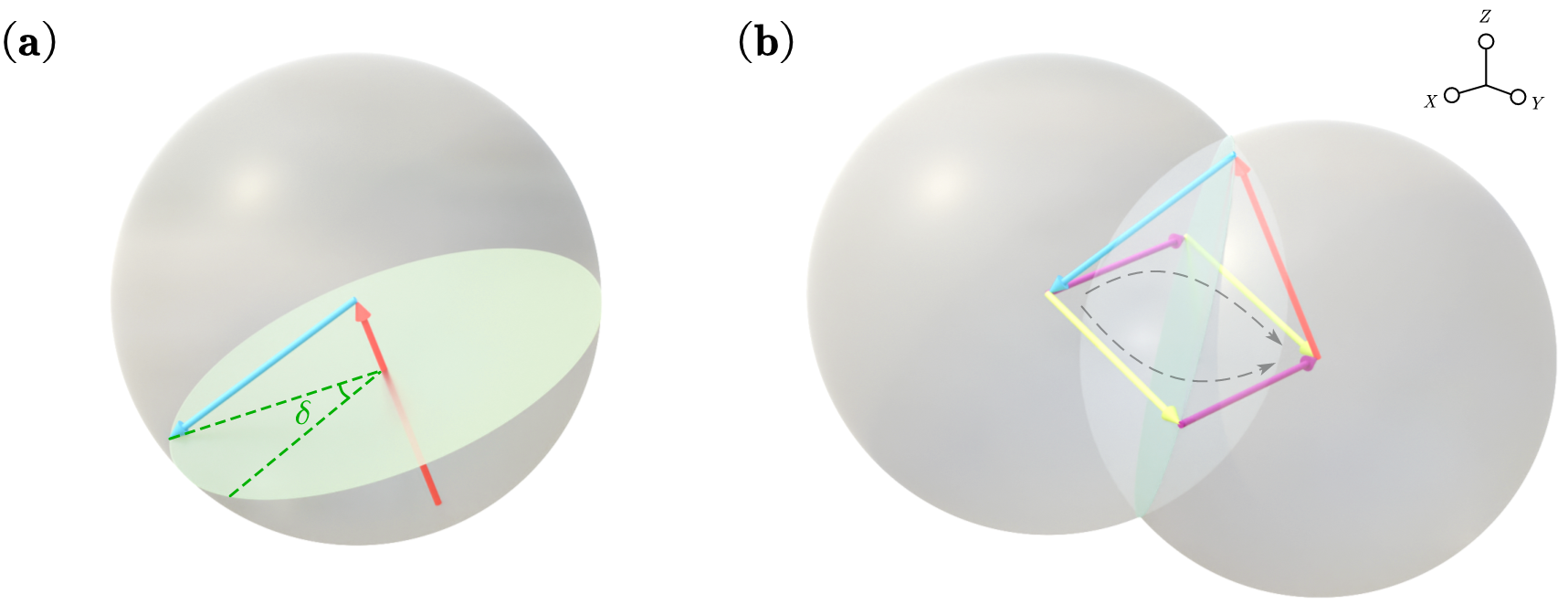}
\caption{\justifying{The construction process of a single-qubit SIC-POVM. Each $|\eta _i\rangle$ has a norm of $a_i^{2}=\frac{1}{2}$. (a) Once the parameters of the first row (red) are selected, the second row (blue) is determined by a single parameter $\delta$.
(b) The last two rows are obtained by ensuring $\sum_i{|\eta _i\rangle}=0$, which guarantees that the tail of $|\eta_3\rangle$ and the head of $|\eta_4\rangle$ lie on the intersecting circles of two spheres with $|\eta_1\rangle$ and $|\eta_2\rangle$ as their radii. The angle $\varDelta _{ij}$ ensures that there are only two possible choices for $|\eta _3\rangle$ and $|\eta _4\rangle$ (yellow and purple or vice versa).}}
\label{fig:sic_construction}
\end{figure}

While selecting $|\eta _1\rangle$
and $|\eta _2\rangle$, the condition indicates that after choosing the parameters $\theta_1$ and $\phi_1$ for $|\eta _1\rangle$, $|\eta_2\rangle$ must maintain a fixed angle $\varDelta_{ij}$ relative to $|\eta_1\rangle$. This restriction confines the tail of $|\eta_2\rangle$ to a specific circle, allowing us to introduce a new angle $\delta$ to replace the original parameters $\theta_2$ and $\phi_2$ for $|\eta_2\rangle$. For $|\eta_3\rangle$, with $\alpha_i$ fixed and the requirement to maintain $\varDelta_{ij}$ with both $|\eta_1\rangle$ and $|\eta_2\rangle$, only two possible choices remain, described by a discrete parameter $c$. Once $|\eta_3\rangle$ is determined, $|\eta_4\rangle$ is implicitly fixed as well. This construction process, illustrated in Fig.~\ref{fig:sic_construction}, is governed by 3 continuous parameters and 1 discrete parameter, consistent with Eq.~\eqref{eq:sicpovm_decomp}.

The continuous parameters $\theta_1$, $\phi_1$, and $\delta$, which define $|\eta_1\rangle$ and $|\eta_2\rangle$, can be adjusted using the single-qubit unitary $U_S = R_z(\phi_1) R_y(\theta_1) R_z(\delta)$. Applying $U_S$ to a reference unitary $U_{\mathrm{SIC}}$ allows for the generation of all other SIC-POVMs. By incorporating the discrete parameter $c$, we construct a reference $U_{\mathrm{SIC}}$ from Set 1 in Fig.~\ref{fig:2sets}, denoted as $U_{\mathrm{SIC-1}}(c, \Theta_1)$:
\begin{align}\label{eq:uk}
\begin{aligned}
U_{\mathrm{SIC-1}}(c,\Theta_1)=\frac{1}{\sqrt{2}}\left[ \begin{matrix}
	1&		0&		1&		0\\
	\frac{1}{\sqrt{3}}&		\frac{\sqrt{2}}{\sqrt{3}}&		-\frac{1}{\sqrt{3}}&		\frac{\sqrt{2}}{\sqrt{3}}\\
	\frac{1}{\sqrt{3}}&		e^{\mathrm{i}\left( -1 \right)^{c}\frac{2 \pi}{3}}\frac{\sqrt{2}}{\sqrt{3}}&		-\frac{1}{\sqrt{3}}&		-e^{\mathrm{i}\left( -1 \right)^{c}\frac{ \pi}{3}}\frac{\sqrt{2}}{\sqrt{3}}\\
	\frac{1}{\sqrt{3}}&		e^{-\mathrm{i}\left( -1 \right)^{c}\frac{2 \pi}{3}}\frac{\sqrt{2}}{\sqrt{3}}&		-\frac{1}{\sqrt{3}}&		-e^{-\mathrm{i}\left( -1 \right)^{c}\frac{ \pi}{3}}\frac{\sqrt{2}}{\sqrt{3}}\\
\end{matrix} \right] ,
\end{aligned}
\end{align}
where $\cos \left( \frac{\varDelta _{ij}}{2} \right) =\frac{1}{\sqrt{3}}$, $ \sin \left( \frac{\varDelta _{ij}}{2} \right) =\frac{\sqrt{2}}{\sqrt{3}}$, $c\in \left\{ 0,1 \right\}$.

\subsection{Relabeling Operators}\label{ap:relabel_op}
Given a 4-element POVM $\{ |\varphi _i\rangle \langle \varphi _i| \}_{i=1}^{4}$, one can construct a $V$ matrix in a specified order, that is
\begin{align}\label{eq:origin_v}
\begin{aligned}
    V=\left[ \begin{matrix}
	|\varphi _1\rangle&		|\varphi _2\rangle&		|\varphi _3\rangle&		|\varphi _4\rangle\\
\end{matrix} \right] ^{\dagger}.
\end{aligned}
\end{align}
However, the construction of the $V$ matrix is not limited to a single ordering. By rearranging the elements, an alternate matrix $V^{\prime}$ can be formulated, such as
\begin{align}\label{eq:relable_v}
\begin{aligned}
    V^{\prime}=\left[ \begin{matrix}
	|\varphi _1\rangle&		|\varphi _2\rangle&		|\varphi _4\rangle&		|\varphi _3\rangle\\
\end{matrix} \right] ^{\dagger}.
\end{aligned}
\end{align}
The matrices $V$ and $V^{\prime}$, though representing the same POVM, result from different parameter sets $\Gamma$. The difference between these matrices is captured by a unitary transformation $U_{\mathrm{rel}}$ that acts on the row space of $V$, yielding $V^{\prime}$ (i.e., $U_{\mathrm{rel}}V = V^{\prime}$). This transformation can be labeled by permutations, such as `1243', which indicates a change from the original order `1234'. Such labels help categorize some different $\Gamma$ into a unified framework.

$U_{\mathrm{rel}}$ can be implemented using CNOT and $X$ gates, resulting in 24 unique $U_{\mathrm{rel}}$. For simplicity, Arabic numerals represent different $X$ circuits, and alphabetic letters represent different CNOT circuits. There are four distinct cases for $X$ circuits:
\begin{equation}        
\includegraphics{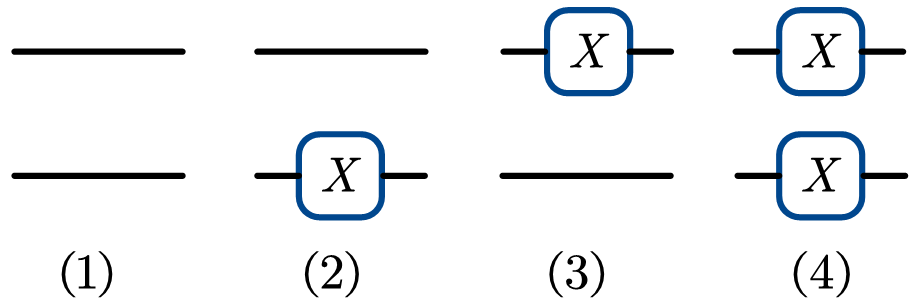}.
\end{equation}
For CNOT circuits, there are a total of $6$ different cases:
\begin{equation}        
\includegraphics{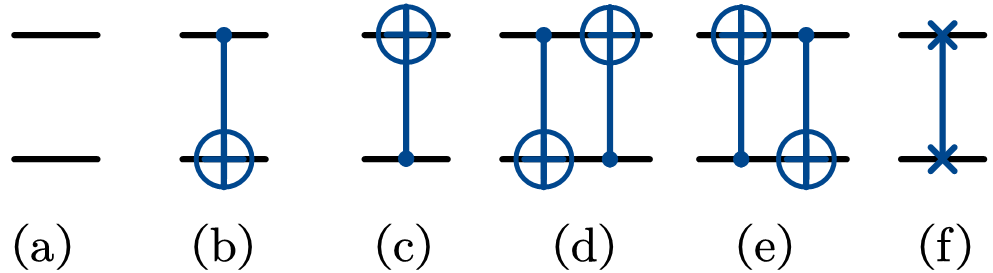},
\end{equation}
where $(\mathrm{f})$ represents a SWAP gate. By combining the $X$ circuits and CNOT circuits, we can generate unique $U_{\mathrm{rel}}$.
% These CNOT circuits can be combined with $X$ circuits to form unique $U_{\mathrm{rel}}$. 
For example, the notation `$3\mathrm{e}$' signifies a specific combination, representing a circuit that
\begin{equation}\label{eq:3e}        
\includegraphics{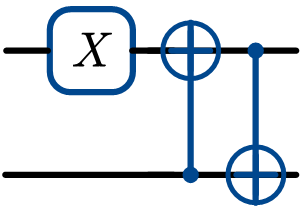}.
\end{equation}
Then, we can outline all 24 operators and their corresponding effects, detailed in Table \ref{table:Relabel Gates}.

\begin{table}[htbp]
\centering
\caption{Relabeling operators}
\label{table:Relabel Gates}
\begin{tabular}{l|cccccc}
\toprule
      & a & b & c & d & e & f \\
\midrule
1 & 1234 & 1243 & 1432 & 1342 & 1423 & 1324 \\
2 & 2143 & 2134 & 2341 & 2431 & 2314 & 2413 \\
3 & 3412 & 3421 & 3214 & 3124 & 3241 & 3142 \\
4 & 4321 & 4312 & 4123 & 4213 & 4132 & 4231 \\
\bottomrule
\end{tabular}
\end{table}

For SIC-POVMs, not all 24 relabeling operators are necessary due to the constraints specified in Eq.~\eqref{eq:sicpovm_conditions}, which ensure identical norms and a fixed angle $\varDelta _{ij}$ among the $|\varphi_i \rangle$ vectors. This allows the first two vectors to be aligned with the desired $|\varphi_1 \rangle$ and $|\varphi_2 \rangle$ using a rotation $U_S$. As a result, the relabeling operations only affect the last two vectors, which can be described by the discrete parameter $c$. Therefore, the 24 operators reduce effectively to two: `$1\mathrm{a}$'(Identity) and `$1\mathrm{b}$'(CNOT).
As introduced in the main text, if we choose the $U_{\mathrm{SIC-2}}(c=1,\Theta_2)$, i.e., the `1243' case shown in Fig.~\ref{fig:sic_two}(b), as the reference SIC-POVM, in order to build the $U_{\mathrm{SIC}}$ with $c=0$ (represented by `1234') as shown in Fig.~\ref{fig:sic_two}(a), we need to implement a CNOT gate as the $U_{\mathrm{rel}}$.

\begin{figure}[H]
\centering
\includegraphics[width=0.55\textwidth]{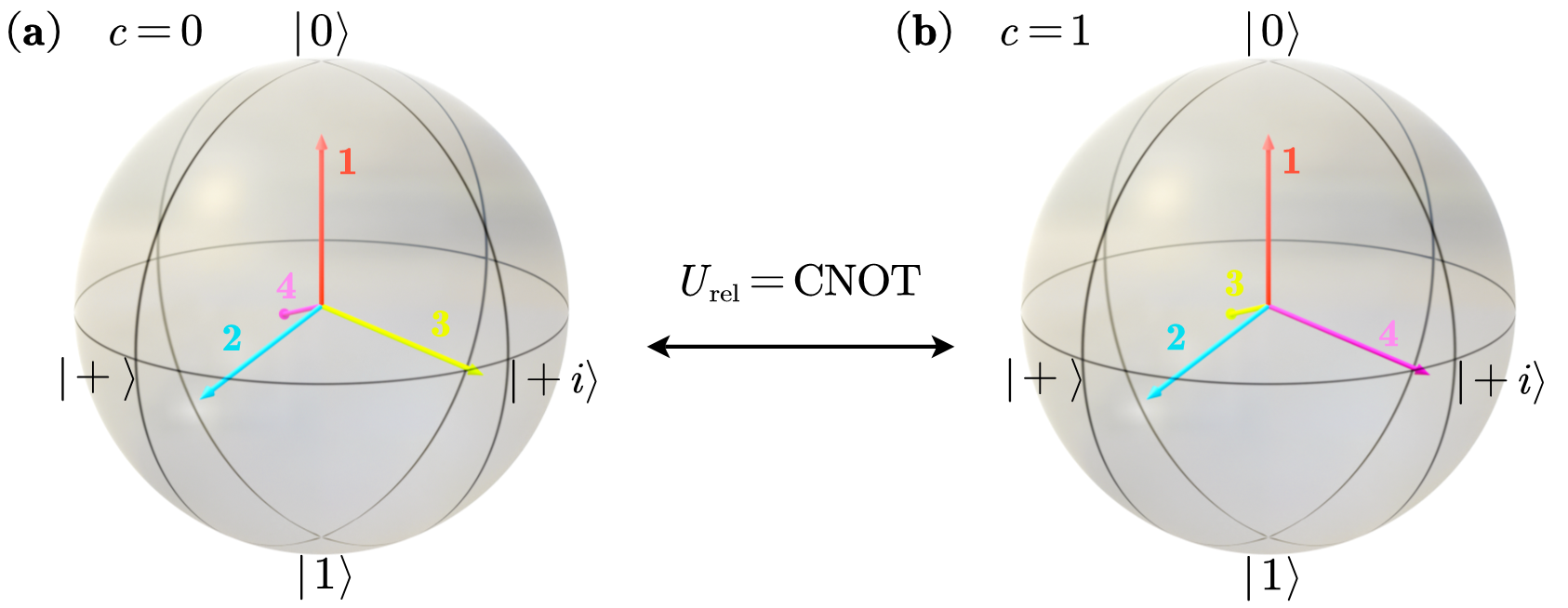}
\caption{\justifying{Two cases of SIC-POVM after adjusting $U_S$. The vectors in the Bloch sphere represent $|\varphi_i\rangle$, corresponding to $|\eta_i\rangle$. By adjusting $U_S$, one can align $|\varphi_1 \rangle$ and $|\varphi_2 \rangle$ (red and blue), consequently narrowing down $|\varphi_3 \rangle$ and $|\varphi_4 \rangle$ (yellow and purple) to two possible scenarios: (a) $c=0$ and (b) $c=1$. These two scenarios can be interconverted using a CNOT gate.}}
\label{fig:sic_two}
\end{figure}

By integrating the three subsections, we clarify the circuit structure presented in Eq.~\eqref{eq:povm_syn} and Eq.~\eqref{eq:sicpovm_decomp}. 
In practice, relabeling operations can be implemented through classical post-processing, eliminating the need to build quantum relabeling circuits such as Eq.~\eqref{eq:3e}. This approach is demonstrated in the circuit shown in Fig.~\ref{fig:prac_circ}(b).

%=====================================================================================
\section{Implementing SIC-POVM by Bell Measurement}\label{ap:sic_bell}

% \subsection{POVM Formed by Bell Measurment}
Bell measurement is a joint measurement for two qubits, typically implemented using a quantum circuit consisting of a CNOT gate followed by a Hadamard gate, that is
\begin{equation}\label{eq:bell_mea}        
\includegraphics{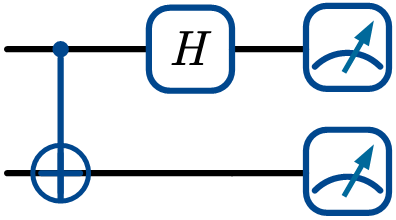}.
\end{equation}

The Bell measurement circuit in Eq.~\eqref{eq:bell_mea} transforms the measurement basis from the computational basis to the Bell basis. The Bell basis consists of four elements, each identified by a measurement outcome bitstring $b$, with different representations as illustrated in Table \ref{table:bell_basis}. Depending on the context, we can choose a preferred representation of the Bell basis.

\begin{table}[h]
\centering
\caption{Different Form of Bell Basis}
\label{table:bell_basis}
\begin{tabular}{l|ccc|c}
\toprule
 Basis & State Vector & Tensor & Tensor (Measure) & Bitstring $b$ \\
\midrule
$|\Phi ^+\rangle$  & $\frac{1}{\sqrt{2}}\left( |00\rangle +|11\rangle \right) $ & \adjustbox{valign=m}{\includegraphics[keepaspectratio]{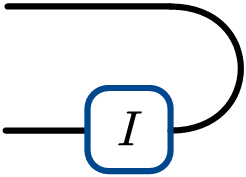}} & \adjustbox{valign=m}{\includegraphics[keepaspectratio]{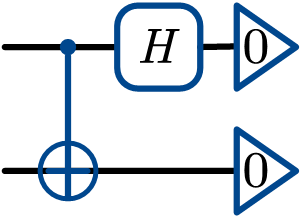}} & 00  \\
$|\Phi ^-\rangle$  & $\frac{1}{\sqrt{2}}\left( |00\rangle -|11\rangle \right) $ &  \adjustbox{valign=m}{\includegraphics[keepaspectratio]{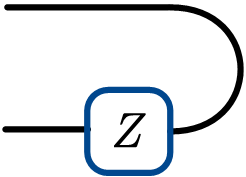}} & \adjustbox{valign=m}{\includegraphics[keepaspectratio]{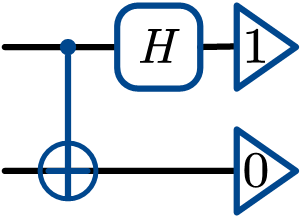}} & 10  \\
$|\Psi ^+\rangle$  & $\frac{1}{\sqrt{2}}\left( |01\rangle +|10\rangle \right) $ &  \adjustbox{valign=m}{\includegraphics[keepaspectratio]{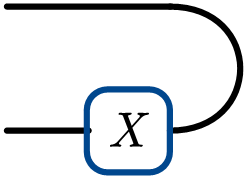}} & \adjustbox{valign=m}{\includegraphics[keepaspectratio]{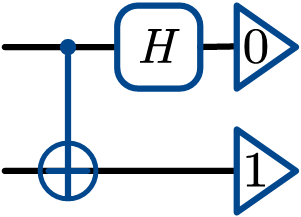}} & 01  \\
$|\Psi ^-\rangle$  & $\frac{1}{\sqrt{2}}\left( |01\rangle -|10\rangle \right) $ &  \adjustbox{valign=m}{\includegraphics[keepaspectratio]{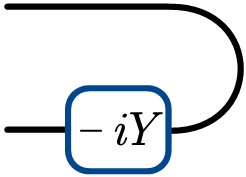}} & \adjustbox{valign=m}{\includegraphics[keepaspectratio]{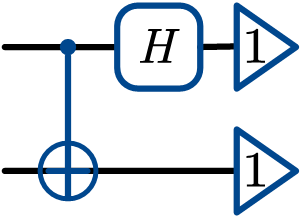}} & 11  \\
\bottomrule
\end{tabular}
\end{table}

When performing a Bell measurement on a system consisting of a target quantum state and an auxiliary qubit, the four possible outcomes can be represented by the tensor networks as
\begin{equation}\label{eq:bell_mea_a}        
\includegraphics{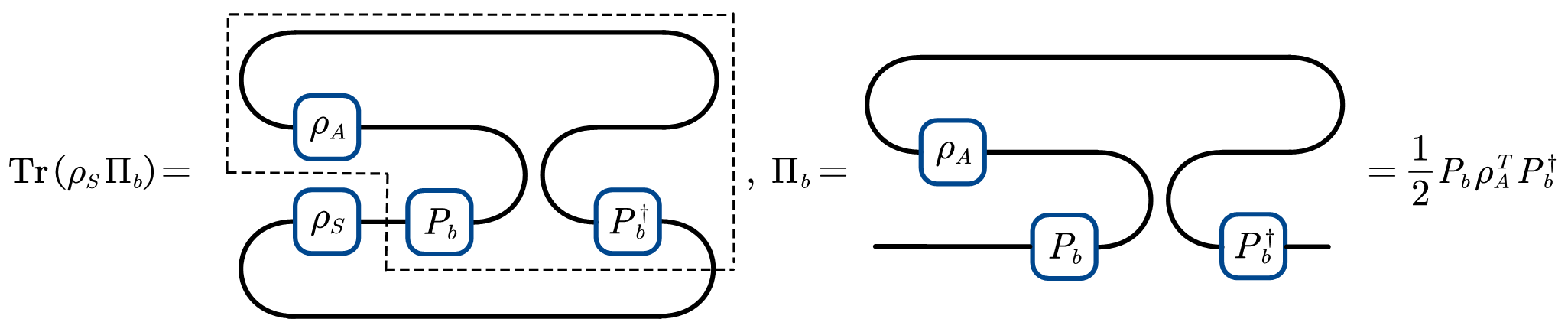}.
\end{equation}
Here, $P_{b} \in \{I,Z,X,-iY\}$, where $b$ corresponds to the measurement bitstring as referenced in Table \ref{table:bell_basis}. In this setup, the auxiliary qubit and the Bell basis together form a POVM operator. The properties of these POVM operators are determined by the auxiliary qubit, with the specific operator selected by $b$.

The Bell basis has a close connection with the Weyl–Heisenberg group covariant SIC-POVM. In a Hilbert space $\mathcal{H}_d = \mathbb{C}^d$, such a SIC-POVM can be generated from a fiducial state $|f\rangle$ \cite{fuchs2017sic,feng2022stabilizer} by applying Weyl displacement operators. These operators are defined by
% the shift operator $X$ and the phase operator $Z$, that is
\begin{align}\label{eq:dis_op}
\begin{aligned}
D_{kl}=\tau ^{kl}X^kZ^l, \ \tau =-e^{\mathrm{i}\frac{\pi}{d}},
\end{aligned}
\end{align}
where $k$ and $l$ are integers modulo $d$, with $X$ as the shift operator and $Z$ as the phase operator, defined as
\begin{align}\label{eq:wely_group}
\begin{aligned}
X = \sum_{j=0}^{n-1}{|j+1\rangle \langle j|}, \ Z = \sum_{j=0}^{n-1}{\omega _{d}^{j}|j\rangle \langle j|}, \ \omega _d= e^{\mathrm{i}\frac{2\pi}{d}}.
\end{aligned}
\end{align}
Here, $|j\rangle$ denotes the $j$th element of orthonormal basis in $\mathcal{H}_d$. Specifically, when $d=2$, these operators correspond to the single-qubit Pauli operators $X$ and $Z$.
% The Weyl displacement operators are defined by:
% \begin{align}\label{eq:dis_op}
% \begin{aligned}
% D_{kl}=\tau ^{kl}X^kZ^l, \ \tau =-e^{\mathrm{i}\frac{\pi}{d}},
% \end{aligned}
% \end{align}
% where $k$ and $l$ are integers modulo $d$. 
Once the fiducial state $|f\rangle$ is determined, the SIC-POVM operators are given by
\begin{align}\label{eq:sic_set}
\begin{aligned}
 \Pi _{kl}=\frac{1}{d}|f_{kl}\rangle \langle f_{kl}|, \ |f_{kl}\rangle =D_{kl}|f\rangle .
\end{aligned}
\end{align}

When considering an auxiliary qubit in a pure state $\rho_{A}=|\psi \rangle \langle \psi |$, its transpose is $\rho_{A}^{T}=|\psi^{*} \rangle \langle \psi^{*} |$. Substituting $\rho_{A}^{T}$ into Eq.~\eqref{eq:bell_mea_a} yields $\frac{1}{2}P_{b}|\psi^{*} \rangle \langle \psi^{*}| P_{b}^{\dagger}$. Comparing this with Eq.~\eqref{eq:sic_set}, it is evident that if $|\psi^{*} \rangle$ is chosen as the fiducial state with an arbitrary phase (e.g., $e^{\mathrm{i}\alpha}|f\rangle$), the Bell measurement-based circuit realizes a SIC-POVM. The primary difference is the phase preceding $Y$, which does not affect the measurement outcome.

\section{Procedure for Determining the Parameters in Circuit}\label{ap:algo1}
In Sec.~\ref{sec:prac}, we introduced a general circuit and a practical circuit to compile an arbitrary $U_{\mathrm{SIC}}$, which requires specific parameters. In this section, we outline some convenient processes for determining these parameters.

\subsection{Parameters for Practical Circuit}\label{app:para_prac}
When compiling $U_{\mathrm{SIC}}$ using practical circuit shown in Fig.~\ref{fig:prac_circ}(b), $U_S$ and $c$ are critical as they directly influence $\Gamma$, which determines the SIC-POVM. We provide Algo.~\ref{algo:algo1} to determine the specific $U_S$ and $c$.

For convenience, we denote the target $U_{\mathrm{SIC}}$ as $U_{\mathrm{SIC}}^{(1)}$ and $U_{\mathrm{SIC-2}}(c=1,\Theta_2)$ as $U_{\mathrm{SIC}}^{(2)}$, where $U_{\mathrm{SIC-2}}(c=1,\Theta_2)$ is the updated reference $U_{\mathrm{SIC}}$ we chosen in Sec.~\ref{sec:prac}. Since $U_S$ and $c$ only relate to $V$ matrix of $U_{\mathrm{SIC}}$ (as defined in Eq.~\eqref{eq:V}), we first extract $V^{(1)}$ and $V^{(2)}$ from $U_{\mathrm{SIC}}^{(1)}$ and $U_{\mathrm{SIC}}^{(2)}$, respectively. Then $U_S$ and $c$ can be derived by solving
\begin{align}\label{eq:para_first}
\begin{aligned}
U_{\mathrm{rel}}U_{\mathrm{ph}}V^{(2)}U_S=V^{(1)},
\end{aligned}
\end{align}
where $U_{\mathrm{rel}}$ is the relabeling gate associated with $c$, and $U_{\mathrm{ph}}$ includes the phase gates with parameters $\beta_1$, $\beta_2$ and $\beta_3$ as described in Eq.~\eqref{eq:sicpovm_decomp}. Though the forms of $U_{\mathrm{rel}}$, $U_{\mathrm{ph}}$, and $U_{S}$ are known, solving this equation directly can be challenging. However, inspired by App.~\ref{ap:para_psi}, as $U_S$ only relate to the first two rows of $V$, we can simplify this problem by splitting $V^{(i)}$ into two parts, that is
\begin{align}
\begin{aligned}
V_{\mathrm{up}}^{\left( i \right)}=\left[ \begin{array}{c}
	\langle \varphi _{1}^{\left( i \right)}|\\
	\langle \varphi _{2}^{\left( i \right)}|\\
\end{array} \right] ,\ V_{\mathrm{down}}^{\left( i \right)}=\left[ \begin{array}{c}
	\langle \varphi _{3}^{\left( i \right)}|\\
	\langle \varphi _{4}^{\left( i \right)}|\\
\end{array} \right] ,
\end{aligned}
\end{align}
where $\langle \varphi _{j}^{(i)}|$, with $i \in \left\{ 1,2\right\}$ and $j\in \left\{ 1,2,3,4 \right\}$, denotes the $j$th row of $V^{(i)}$. Using $V_{\mathrm{up}}^{(i)}$, $U_S$ can be determined from
\begin{align}\label{eq:sub_para}
\begin{aligned}
R_z(\beta_2)V_{\mathrm{up}}^{\left( 2 \right)}U_S=V_{\mathrm{up}}^{\left( 1 \right)}.
\end{aligned}
\end{align}
Here, $\beta_2$ is the only parameter on the left side of $V_{\mathrm{up}}^{(2)}$, affecting the phase of $\langle \varphi_2^{(2)}|$. For simplify, we adopt a normalized vector representation: $\langle \phi _{j}^{(i)}|=\frac{1}{\sqrt{2}}\langle \varphi _{j}^{(i)}|$. We then construct a unitary matrix $U^{(i)}=|\phi _{1}^{(i)}\rangle \langle 0|+|\phi _{1}^{(i)\bot}\rangle \langle 1|e^{\mathrm{i}\theta}$, such that $\langle \phi _{1}^{(i)}|U^{(i)}= \langle 0|$ and $\langle \phi _{2}^{(i)}|U^{(i)}= \langle \phi _{2}^{(i)\prime}|$. Here, we opt $\theta=0$ and let $|\phi _{1}^{(i)\bot}\rangle$ be an arbitrary normalized vector orthogonal to $|\phi _{1}^{(i)}\rangle$. We can further express $\langle \phi _{2}^{(i)\prime}|$ as
\begin{align}\label{eq:rewritephi2}
\begin{aligned}
\langle \phi _{2}^{(i)'}|=ae^{\mathrm{i}\alpha _{i1}}\langle 0|+\sqrt{1-a^2}e^{\mathrm{i}\alpha _{i2}}\langle 1|,
\end{aligned}
\end{align}
where $a$ is a real number that is not of concern, and $\alpha_{ik}$, with $k\in \{ 1,2 \}$, are the phases preceding $\ket{0}$ and $\ket{1}$ of $\langle \phi _{2}^{(i)\prime}|$, respectively. $\beta_2$ can then be straightforwardly determined by $\beta_2=\alpha_{11}-\alpha_{21}$. With the value of $\beta_2$, we have
\begin{align}\label{eq:rotatebeta2}
\begin{aligned}
R_{z}(\beta_2) \left[ \begin{array}{c}
	\langle 0|\\
	\langle \phi _{2}^{(2)'}|\\
\end{array} \right] =\left[ \begin{array}{c}
	\langle 0|\\
	\langle \phi _{2}^{(2)''}|\\
\end{array} \right] ,
\end{aligned}
\end{align}
where $\langle \phi _{2}^{(2)''}|=ae^{\mathrm{i}\alpha _{11}}\langle 0|+\sqrt{1-a^2}e^{\mathrm{i}\left( \alpha _{22}+\alpha _{11}-\alpha _{21} \right)}\langle 1|$. Subsequently, by employing a unitary $U_r=|0\rangle \langle 0|+e^{\mathrm{i}\left( \alpha _{12}-\alpha _{22}-\alpha _{11}+\alpha _{21} \right)}|1\rangle \langle 1|$, we achieve $\langle \phi _{2}^{(2)''}|U_r=\langle \phi _{2}^{(1)'}|$. Combining all the operations, we obtain $R_{z}(\beta)V_{\mathrm{up}}^{(2)}U^{(2)}U_r(U^{(1)})^{\dagger}= V_{\mathrm{up}}^{(1)}$. Comparing with Eq.~\eqref{eq:sub_para}, it is evident that
\begin{align}\label{eq:deter_us}
\begin{aligned}
U_S=U^{(2)} U_{r} (U^{(1)})^{\dagger}.
\end{aligned}
\end{align}

Using $V_{\mathrm{down}}^{(i)}$, we can determined $c$ by
\begin{align}
\begin{aligned}
U_{\mathrm{rels}}U_{\mathrm{phs}}V_{\mathrm{down}}^{(2)}U_S = V_{\mathrm{down}}^{(1)},
\end{aligned}
\end{align}
where $U_{\mathrm{phs}}$ and $U_{\mathrm{rels}}$ denote the phase adjustments and relabeling operator performed within $V_{\mathrm{down}}^{(i)}$. These operators are defined as
\begin{align}\label{eq:subops}
\begin{aligned}
U_{\mathrm{phs}}&=\left[ \begin{matrix}
	e^{\mathrm{i}\beta _1}&		0\\
	0&		e^{\mathrm{i}\left( \beta _1+\beta _2+\beta _3 \right)}\\
\end{matrix} \right], \
U_{\mathrm{rels}}&=I\,\,\mathrm{or}\,\,X,
\end{aligned}
\end{align}
and are related to $U_{\mathrm{ph}}$ and $U_{\mathrm{rel}}$ by $U_{\mathrm{ph}}=R_z(\beta_2)\oplus U_{\mathrm{phs}}$ and $U_{\mathrm{rel}}=I\oplus U_{\mathrm{rels}}$ (yeilds $I$ or CNOT). With $U_S$ determined from Eq.~\eqref{eq:deter_us}, we have 
\begin{align}\label{eq:upr}
\begin{aligned}
U_{\mathrm{pr}}=U_{\mathrm{rels}}U_{\mathrm{phs}}=V_{\mathrm{down}}^{(1)}(V_{\mathrm{down}}^{(2)}U_{S})^{-1}.
\end{aligned}
\end{align}
Here we denote $V_{\mathrm{down}}^{(1)}(V_{\mathrm{down}}^{(2)}U_{S})^{-1}$ as $U_{\mathrm{pr}}$. The value of $c$ is directly related to the form of $U_{\mathrm{rels}}$, while $U_{\mathrm{phs}}$ is always a diagonal matrix and does not affect the matrix form. Therefore, $c$ can be determined by examining the form of $U_{\mathrm{pr}}$. If $U_{\mathrm{pr}}$ is a diagonal matrix, no relabeling operation is needed, which implies that the value of $c$ for target unitary $U_{\mathrm{SIC}}^{(1)}$ remains the same as in $U_{\mathrm{SIC}}^{(2)}$, so $c = 1$. Conversely, if $U_{\mathrm{pr}}$ is not a diagonal matrix, then $c = 0$.

\subsection{Parameters for General Circuit}
To optimize the compilation of a $U_{\mathrm{SIC}}$, one only needs to implement the practical circuit shown in Fig.~\ref{fig:prac_circ}(b), which involves just $U_S$ and $c$. However, if we want to decompose $U_{\mathrm{SIC}}$ using the general circuit illustrated in Fig.~\ref{fig:prac_circ}(a), which has the completely same matrix representation as the target $U_{\mathrm{SIC}}$, additional parameters $\beta_1$, $\beta_2$, $\beta_3$, and $Q$ are required. These parameters can be obtained from the outcomes and intermediate variables of Algo.~\ref{algo:algo1}, as detailed in Algo.~\ref{algo:algo2}.

\begin{algorithm}[H]
    \caption{Parameters for General Circuit}\label{algo:algo2}
    \begin{algorithmic}[1]
    \Require
     Target $U_{\mathrm{SIC}}$, denoted as $U_{\mathrm{SIC}}^{(1)}$; $U_{\mathrm{SIC-2}}(c=1,\Theta_2)$, denoted as $U_{\mathrm{SIC}}^{(2)}$
    \Ensure
    $\beta_1$, $\beta_2$, $\beta_3$, $Q$
    \State Execute Algo.~\ref{algo:algo1}, get outputs: $U_S$, $c$ and intermediate variables: $\left\{\alpha_{11},\alpha_{12},\alpha_{21},\alpha_{22} \right\}$, $U_{\mathrm{pr}}$
    \State Get $\beta_2=\alpha_{11}-\alpha_{21}$
    \State Get $\beta_1=\mathrm{arg}(\sum_i{{u_{\mathrm{pr}}}_{1}^{i}})$, $\beta_3=\mathrm{arg}(\sum_i{{u_{\mathrm{pr}}}_{2}^{i}})-\beta_1-\beta_2$, where ${u_{\mathrm{pr}}}_{j}^{i}$ is the $i$th row and $j$th column entry of $U_{\mathrm{pr}}$
    \State Extract $W^{(i)}$ for $U_{\mathrm{SIC}}^{(i)}$, with $i \in \left\{ 1,2\right\}$, as Eq.~\eqref{eq:povm_elements_matrix}
    \State Update $W^{(2)} \gets C_{R_z}(\beta_3)[R_z(\beta_1)\otimes R_z(\beta_2)]W^{(2)}U_{S}$
    \If {$c=0$}
    \State Update $W^{(2)} \gets C_X W^{(2)}$
    \EndIf
    \State Get $Q=(W^{(2)})^{+}W^{(1)}$
    \end{algorithmic}
\end{algorithm}

To determine the parameters $\beta_1$, $\beta_2$, and $\beta_3$, we rely on the values of $\alpha_{ik}$ in Eq.~\eqref{eq:rewritephi2} and $U_{\mathrm{pr}}$ in Eq.~\eqref{eq:upr}. First, $\beta_2$ controls the phase of the second row of the $V$ matrix and is derived directly from $\beta_2 = \alpha_{11} - \alpha_{21}$, as discussed in Eq.~\eqref{eq:rotatebeta2}.

Next, the parameters $\beta_1$ and $\beta_3$ correspond to the third and fourth rows of $V$, which are represented by $U_{\mathrm{phs}}$ in Eq.~\eqref{eq:subops}. Since $U_{\mathrm{phs}}$ is always a diagonal matrix and $U_{\mathrm{rels}}$ only affects the row order, the first column's non-zero element of $U_{\mathrm{pr}} = U_{\mathrm{rels}}U_{\mathrm{phs}}$ gives $e^{\mathrm{i}\beta_1}$, while the non-zero element in the second column equals $e^{\mathrm{i}(\beta_1 + \beta_2 + \beta_3)}$. Note that each column of $U_{\mathrm{pr}}$ contains only one non-zero element, $\beta_1$ and $\beta_3$ can be calculated as follows:
\begin{align}
\begin{aligned}
\beta_1=\mathrm{arg}(\sum_i{{u_{\mathrm{pr}}}_{1}^{i}}),\ \beta_3=\mathrm{arg}(\sum_i{{u_{\mathrm{pr}}}_{2}^{i}})-\beta_1-\beta_2,
\end{aligned}
\end{align}
where ${u_{\mathrm{pr}}}_{j}^{i}$ is the entry in the $i$th row and $j$th column of $U_{\mathrm{pr}}$. This approach allows us to determine $\beta_1$ and $\beta_3$ without considering the specific type of $U_{\mathrm{rels}}$.

Since $Q$ is related to the $W$ matrix of $U_{\mathrm{SIC}}$, we first extract $W^{(1)}$ and $W^{(2)}$ from $U_{\mathrm{SIC}}^{(1)}$ and $U_{\mathrm{SIC}}^{(2)}$, respectively. Once the other parameters have been determined, the next step is to ensure that $W^{(2)}$ is brought into the same column space as $W^{(1)}$. To achieve this, we first update $W^{(2)}$ by
\begin{align}
\begin{aligned}
W^{(2)} \gets C_{R_z}(\beta_3)[R_z(\beta_1)\otimes R_z(\beta_2)]W^{(2)}U_{S}
\end{aligned}
\end{align}
and apply the corresponding $U_{\mathrm{rel}}$. After this update, with $W^{(2)}$ now aligned to the column space of $W^{(1)}$, the unique matrix $Q$ can be calculated as
\begin{align}
\begin{aligned}
Q=(W^{(2)})^{+}W^{(1)},
\end{aligned}
\end{align}
where $(W^{(2)})^{+}$ denotes the pseudo-inverse of $W^{(2)}$.
\end{appendix}

\end{document}